\newtheorem{theorem}{Theorem}
\newtheorem{lemma}{Lemma}
\newtheorem*{repos*}{Reducibility Postulate}
\begin{document}

\title{Solvability of Cubic Graphs - \\
From Four Color Theorem to NP-Complete}


\author{
	\IEEEauthorblockN{ Tony T. Lee \IEEEauthorrefmark{1} and Qingqi Shi\IEEEauthorrefmark{2}}
	\IEEEauthorblockA{\IEEEauthorrefmark{1}State Key Laboratory of Advanced Communication Systems and Networks, \\
	Department of Electronic Engineering, Shanghai Jiao Tong University,
	Shanghai, China\\
	ttlee@ie.cuhk.edu.hk}
	\IEEEauthorblockA{\IEEEauthorrefmark{2}Department of Information Engineering,
	The Chinese University of Hong Kong,
	Hong Kong, China\\
	qqshi@ie.cuhk.edu.hk}
	
}

\maketitle

\begin{abstract}

Similar to Euclidean geometry, graph theory is a science that studies figures that consist of points and lines. The core of Euclidean geometry is the parallel postulate, which provides the basis of the geometric invariant that the sum of the angles in every triangle equals $\pi$ and Cramer's rule for solving simultaneous linear equations. Since the counterpart of parallel postulate in graph theory is not known, which could be the reason that two similar problems in graph theory, namely the four color theorem (a topological invariant) and the solvability of NP-complete problems (discrete simultaneous equations), remain open to date. In this paper, based on the complex coloring of cubic graphs, we propose the reducibility postulate of the Petersen configuration to fill this gap. Comparing edge coloring with a system of linear equations, we found that the postulate of reducibility in graph theory and the parallel postulate in Euclidean geometry share some common characteristics of the plane. First, they both provide solvability conditions on two equations in the plane. Second, the two basic invariants of the plane, namely the \textit{chromatic index} of bridgeless cubic plane graphs and the \textit{sum of the angles} in every triangle, can be respectively deduced from them in a straightforward manner. This reducibility postulation has been verified by more than one hundred thousand instances of Peterson configurations generated by computer. Despite that, we still don't have a logical proof of this assertion. Similar to that of the parallel postulate, we tend to think that describing these natural laws by even more elementary properties of the plane is inconceivable.
\end{abstract}

\begin{IEEEkeywords}
edge coloring; color exchange; Kempe walk; Petersen graph
\end{IEEEkeywords}

\section{Introduction and Overview}
\label{sec1}
Similar to Euclidean geometry, graph theory is a science that studies figures that consist of points and lines. Instead of measuring angles and distances, graph theory focuses on the topological configurations that are composed of vertices and edges. The core of Euclidean geometry is the fifth postulate, commonly called the parallel postulate. To the ancients, however, the parallel postulate was less obvious than the other four postulates. For last two thousand years, many tried in vain to prove the parallel postulate using Euclid's other four postulates \cite{greenberg2008euclidean}. Some false proofs of the parallel postulate were accepted for many years before they were exposed. It is now known that the parallel postulate is a natural law of two dimensional Euclidean planes, and a proof is impossible. This law has produced the following monumental ramifications: 
\begin{enumerate}
\item  The geometric invariant that the sum of the angles in every triangle equals $\pi$ is a direct consequence of the parallel postulate. A generalization of this result is the Gauss-Bonnet theorem in differential geometry.
\item  The parallel postulate provides the solvability condition of two linear equations in the plane. The theory of determinant and Cramer's rule for solving simultaneous linear equations are generalization of this condition in Euclidean space. 
\end{enumerate}

Since the counterpart of the parallel postulate in graph theory is not known, which could be the reason that the theoretical proofs or solutions of two similar problems in graph theory, namely the four color theorem (a topological invariant) and the solvability of NP-complete problems (discrete simultaneous equations), remain open to date. In this paper, based on the complex coloring method described in \cite{Lee2013}, we propose the reducibility postulate of the Petersen configuration to fill this gap. An immediate consequence of this proposition is the 3-edge coloring theorem, or equivalently, the four color theorem (4CT). This self-evident proposition has been verified by more than one hundred thousand instances generated by computer. Despite that, we still don't have a logical proof of this assertion.

The 4CT is the holy grail of graph theory, but the proof of this simply stated theorem is elusive. The 4CT states that the minimum number of colors required to color a map is four, which represents a topological invariant of the plane. Ever since the problem was raised by Francis Guthrie in 1852, the theorem was falsely proved twice by Alfred Kempe \cite{kempe1879} in 1879, and Peter Tait \cite{Tait1880} in 1880. Despite their fruitless efforts, their ideas provided fundamental insights into graph coloring that are still of paramount importance in graph theory. The 4CT was finally proved by Kenneth Appel and Wolfgang Haken \cite{Appel1977_1,Appel1977_2} in 1976. Their proof relies on computer-aided checking that cannot be verified by a human. In 1997, Neil Robertson, Daniel Sanders, Paul Seymour, and Robin Thomas published a simpler version of a computer-assisted proof based on the same idea \cite{Robertson19972}. The history of 4CT is detailed in \cite{Fritsch1998four,wilson2002four}, and a brief survey on the progress is provided in \cite{Thomas98anupdate}.

In his 1880 paper, Peter Tait proposed that any bridgeless cubic planar graph has a Hamiltonian cycle. His proof of 4CT based on this false assumption was refuted by Julius Petersen \cite{Petersen1891} in 1891. However, it was not until 1946 that William Tutte found that there are such planar graphs without any Hamiltonian tours. The main contribution of his paper was to establish the following equivalent formulation of 4CT.

\begin{theorem}[Tait]
A bridgeless cubic planar graph $G$ is 4-face-colorable if and only if $G$ is 3-edge-colorable.
\end{theorem}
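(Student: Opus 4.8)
The plan is to prove the classical Tait theorem, which establishes the equivalence between 4-face-colorability and 3-edge-colorability of a bridgeless cubic planar graph $G$. The natural strategy is to exhibit an explicit correspondence between the two types of colorings, using the algebraic structure of the Klein four-group $\mathbb{Z}_2 \times \mathbb{Z}_2$, whose three nonzero elements will serve simultaneously as the edge colors and as differences of face colors. I would present the proof in two directions, showing how a valid face coloring induces a valid edge coloring and conversely.

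First I would set up the coloring framework. Let the four face colors be the elements of the Klein four-group $K = \{0, a, b, c\}$, where $a + b = c$, $a + c = b$, $b + c = a$, and every nonzero element is its own inverse. The three nonzero elements $\{a, b, c\}$ will be the three edge colors. For the forward direction, assuming $G$ admits a proper 4-face-coloring, I would color each edge $e$ by the group sum (equivalently, difference) of the colors of the two faces that $e$ separates. The key step is to verify that this edge coloring is proper: since adjacent faces across any edge receive distinct colors, the difference is always a nonzero element of $K$, so every edge receives one of the three edge colors; and at each vertex of the cubic graph, the three incident edges border three pairwise distinct face-color differences, which I would check sum to zero and are therefore all distinct, yielding a proper 3-edge-coloring.

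For the converse direction, I would assume $G$ is 3-edge-colorable and reconstruct a proper 4-face-coloring. Here I would exploit planarity: pick a root face, assign it the color $0$, and define the color of any other face by summing (in $K$) the edge colors encountered along a path in the dual crossing from the root face to the target face. The main obstacle, and the step I expect to require the most care, is showing this face coloring is \emph{well-defined} and independent of the chosen path. This is where planarity and the cubic, bridgeless hypotheses enter essentially: well-definedness reduces to showing that the sum of edge colors around any closed dual cycle is zero, which follows because any closed curve in the plane crosses a set of edges that, by the cubic structure and the properness of the edge coloring at each enclosed vertex, contributes cancelling pairs in the group $K$. Once well-definedness is established, properness of the resulting face coloring is immediate, since two faces sharing an edge differ by that edge's nonzero color.

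I expect the forward direction to be essentially routine algebra in $K$, while the converse is the delicate part, since it is precisely where the topology of the plane is invoked. In particular the bridgeless hypothesis guarantees that every edge genuinely separates two distinct faces, so that the edge-to-face-difference map is meaningful, and planarity guarantees that closed dual walks bound regions whose interior vertices each contribute a vanishing sum. I would organize the argument around the invariant that edge colors are face-color differences, which makes both directions symmetric and transparent.
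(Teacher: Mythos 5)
Your proposal is correct: it is the classical argument for Tait's theorem using the Klein four-group $K=\{0,a,b,c\}$ (every element its own inverse, $a+b=c$), with edge colors realized as differences of face colors. Be aware, however, that the paper itself gives no proof of this statement: Tait's theorem appears there only as quoted background, with the proof explicitly deferred to cited textbooks, so there is no internal argument to compare yours against; what you have written is essentially the standard proof contained in those references. Two details deserve explicit treatment in a full write-up. First, in the forward direction, the fact that the three faces meeting at a vertex are pairwise distinct and pairwise adjacent (so that the three differences are nonzero) rests on 2-connectedness, which holds here because a cubic bridgeless graph has equal edge- and vertex-connectivity, so every face boundary is a cycle passing through each vertex at most once; given that, the three edge colors at a vertex are nonzero and sum to zero in $K$, and in the Klein group three nonzero elements summing to zero must be exactly $\{a,b,c\}$. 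Second, in the converse direction, the cleanest form of your well-definedness step is through edge cuts rather than ``cancelling pairs'': the edges crossed by a closed dual curve are precisely those of the cut $\delta(S)$ determined by the set $S$ of enclosed vertices, and the color sum over $\delta(S)$ equals $\sum_{v\in S}(a+b+c)$ minus twice the contribution of edges with both ends in $S$; both terms vanish in $K$ because $a+b+c=0$ and $2x=0$ for all $x$. With these points made precise, the argument is complete and fills in exactly what the paper leaves to its references.
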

A proof of this theorem can be found in many books and papers \cite{chartrand2011graphs, west2001introduction}. It essentially transforms the 4CT from a vertex-coloring problem into an edge-coloring problem.

In computer science, satisfiability, abbreviated as SAT, is the problem of determining if there exists a truth assignment of variables that satisfies a given Boolean formula. SAT was the first known example of an NP-complete problem. The satisfiability problem of a Boolean expression $\varphi$ can also be considered as the solvability of a set of simultaneous Boolean equations \cite{papadimitriou1995computational}. For example, a truth assignment of the expression $\varphi=(x_1\vee x_2\vee x_3)\wedge(\overline{x_2}\vee x_3)\wedge (x_1\vee \overline{x_3})$ is a solution of the following set of Boolean equations:

\begin{eqnarray}
 x_1 \vee x_2 \vee x_3 = 1,      \nonumber \\
   \overline{x}_2 \vee x_3=1, \nonumber \\
\notag   x_1 \vee \overline{x_3}=1.
\end{eqnarray}

Many NP-complete problems can also be considered as solving a set of simultaneous discrete equations. One of them is to decide the chromatic index of a cubic graph. The chromatic index $\chi_{e}(G)$ of a simple graph $G$ is the minimum number of colors required to color the edges of the graph such that no adjacent edges have the same color. A theorem proved by Vizing \cite{vizing1964} states that the chromatic index is either $\Delta$ or $\Delta+1$, where $\Delta$ is the maximum vertex degree of graph $G$. Graph $G$ is said to be Class 1 if $\chi_{e}(G)=\Delta$; otherwise, it is Class 2. According to Tait's equivalent formulation, the 4CT is established if every bridgeless cubic planar graph $G$ is Class 1. In \cite{Holyer1981}, Holyer proved that a Boolean expression $\varphi$ can be converted into a cubic graph $G$, such that $\varphi$ is satisfiable if and only if $G$ is 3-edge colorable. Therefore, it is NP-complete to determine the chromatic index of an arbitrary cubic graph.

The concept of variable edges is introduced in the complex coloring method; a proper coloring is achieved by eliminating all variables in a color configuration of a cubic graph. A connected bridgeless cubic graph that does not have 3-edge coloring is called a snark. Many important and difficult problems in graph theory are related to snarks. However, their properties and structures are still largely unknown. For the first time, this paper completely specifies the necessary and sufficient condition of snarks in terms of complex coloring configurations. In principle, the entire graph $G=(V,E)$ can be considered as a set of simultaneous equations, in which each vertex $v\in V$ represents a constraint on coloring of edges. The variable elimination procedure of edge coloring is similar to the algebraic method for solving systems of linear equations. A comparison between these two procedures is summarized in Table \ref{table1}.

\begin{table}[ht]
\label{table1}
\centering
\caption{A comparison between linear equations and edge coloring}
\begin{tabular}{| c | c | c |} 
\hline
 & \textbf{System of linear equations} \newline
(Euclidean Geometry)
 & \textbf{Edge coloring} \newline
(Graph Theory)
 \\
\hline
\textbf{Operations} & Arithmetic Operations & Color Exchanges  \\ 
\hline
\textbf{Constraints} & Linear Equations & Vertices and edges  \\
\hline
\textbf{Unknowns} & Variables & Variable-edges \\
\hline
\textbf{Algorithms} & Variable Elimination & Variable Elimination \\
\hline
\textbf{Solution} & Consistency & 3-edge coloring \\ 
\hline
\textbf{No solution} & Inconsistency & Snark \\  
\hline
\end{tabular}
\end{table}

The inconsistency of two linear equations is usually interpreted as two parallel lines in a Euclidean space. In the complex coloring of the Petersen graph, the smallest snark, the configuration contains two variables in two disjoint odd cycles, which can never be eliminated because they will never meet each other. That is, the two cycles behave the same as \textit{two parallel lines} in a Euclidean space. The analogy between the parallel postulate in Euclidean geometry and the reducibility postulate in graph theory is illustrated in Fig.~\ref{fig_1}. They share some common characteristics of the plane. First, they can respectively deduce the two invariants of the plane. Second, they both provide solvability conditions on equations in the plane. 

\begin{figure}[htbp]
\centering
	\subfigure[Euclidean geometry.]{
		\includegraphics[scale=0.7]{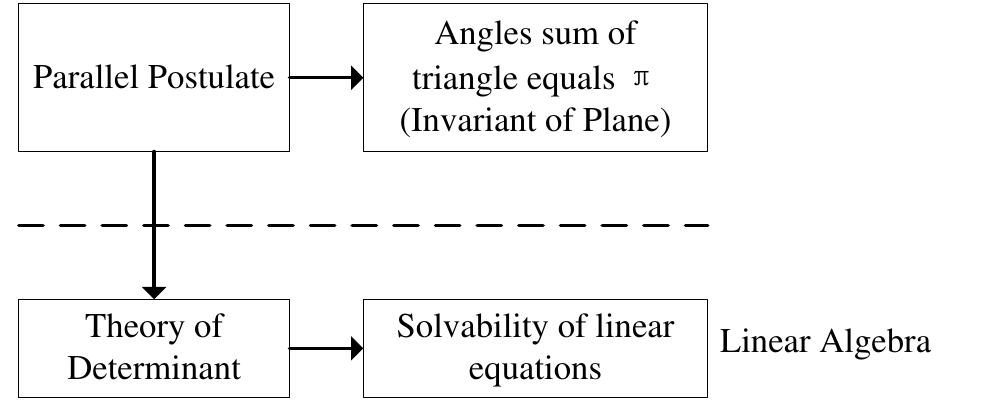}
		\label{fig_1_a}
		} \quad
	\subfigure[Graph theory.]{
		\includegraphics[scale=0.7]{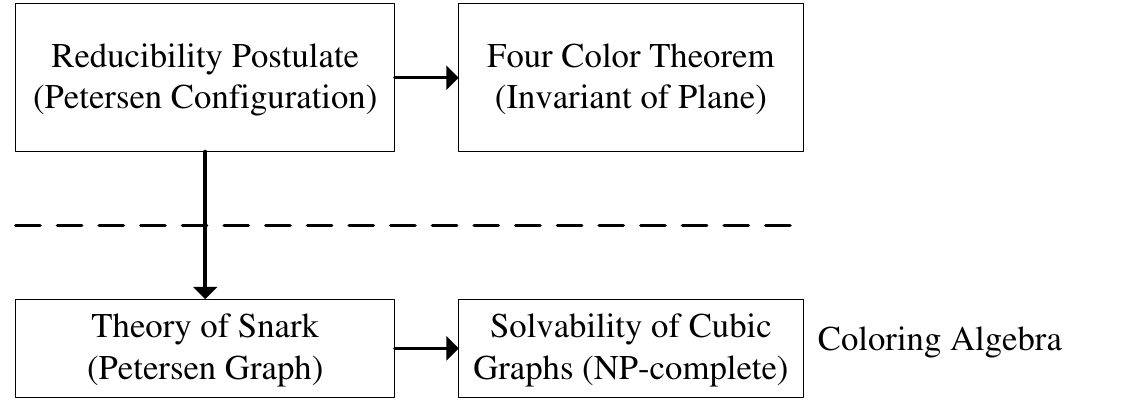}
		\label{fig_1_b}
		}
\caption{The two postulates and their ramifications.} 
\label{fig_1}
\end{figure}

The rest of this paper is organized as follows. In section \ref{sec2}, we briefly describe the basic concept of complex colors and the rules of color exchanges. In section \ref{sec3}, we introduce the decomposition of a three-colored configuration into two-colored maximal sub-graphs. In section \ref{sec4}, we define the reducibility of configurations, from which we show that a snark can be characterized by a closed set of irreducible configurations. In Section \ref{sec5}, we introduce the Petersen configuration, and propose the postulate that a Petersen configuration of a bridgeless cubic planar graph must be reducible. In section \ref{sec6}, we show that the 3-edge coloring theorem of bridgeless cubic planar graphs is an immediate consequence of the reducibility postulate. The discussion in Section \ref{sec7} focuses on a comparison between graph theory and Euclidean geometry, in particular, the analogy between the reducibility postulate in graph theory and the parallel postulate in Euclidean geometry. Section \ref{sec8} provides a conclusion. Furthermore, Appendix \ref{appdx_A} provides examples of contraction of snarks to the Petersen graph. A video clip to demonstrate the operation of complex coloring was posted at YouTube at \url{http://www.youtube.com/watch?v=KMnj4UMYl7k}.

\section{Preliminaries}
\label{sec2}
An edge-coloring method based on exchanges of complex colors is proposed in \cite{Lee2013}. The basic idea is to partition each edge of a graph into two links, then color the links and perform color exchanges between the links instead of the edges. This color exchange method is briefly described in this section to facilitate our discussions. Even though the complex coloring method can be applied to any graph, we only concentrate on the complex coloring of bridgeless cubic graphs, because it is the focal point of this paper.

\subsection{Complex Coloring of Cubic Graphs}
Let $G=(V,E)$ be a cubic graph with vertex set $V$, edge set $E$. The \textbf{\textit{incidence graph}} $G^*$ is constructed from $G$ by placing a fictitious vertex in the middle of each edge of $G$. Let $E^*(G^*)=\{e_{i,j}^*|e_{i,j}\in E(G) \}$ denote the set of fictitious vertices on edges. Then edge $e_{i,j}\in E(G^* )$ consists of two \textbf{\textit{links}}, denoted by $l_{i,j}=(v_i,e_{i,j}^*)$ and $l_{j,i}=(v_j,e_{i,j}^*)$, which connect two end vertices $v_i$ and $v_j$  of $e_{i,j}$. Fig.~\ref{fig_2_a} illustrates the incidence graph of the tetrahedron.

Let $L(G^*)$ be the set of links and $C=\{a, b, c\}$ denote the set of three colors. A \textbf{\textit{coloring function}} is a mapping of colors on links, $\sigma : L(G^*)\rightarrow C$. The color of link $l_{i,j}\in L(G^*)$ is denoted as $\sigma(l_{i,j})=c_{i,j}$. Since each edge $e\in E(G^*)$ consists of two links, the color function can also be considered as a mapping defined on the set of edges, $\sigma:E(G^*)\rightarrow C\times C$. Since the coloring function assigns two colors to each edge of graph $G$, or one color for each link of the incidence graph $G^*$, the mapping $\sigma$ is called a \textbf{\textit{complex coloring}} of graph $G$.

The coloring function $\sigma$ is \textbf{\textit{consistent}} if colors assigned to those links incident to the same vertex $v$ are all distinct for all $v\in V(G)$. We define the \textbf{\textit{colored edge}} $\sigma(e_{i,j})={\vec e}_{i,j}=(c_{i,j},c_{j,i})=(\alpha, \beta)$, $\alpha, \beta \in C$ as a two-tuple color vector, where $c_{i,j}=\sigma(l_{i,j})=\alpha$ and $c_{j,i}=\sigma(l_{j,i})=\beta$ are respective colors of the two links of $e_{i,j}$. The vector of colored edge $e_{i,j}=(\alpha, \beta)$ is a \textbf{\textit{variable}} if $\alpha \neq \beta$; otherwise, $e_{i,j}=(\alpha, \alpha)$ is a \textbf{\textit{constant}}. A proper 3-edge coloring of graph $G$, called a Class 1 graph, can be achieved by eliminating all variables. The graph $G$ is a Class 2 graph if variables cannot be eliminated.

As an example, a consistent coloring of a tetrahedron containing two $(a, b)$ variables is shown in Fig.~\ref{fig_2_b}, and a proper coloring of a tetrahedron is shown in Fig.~\ref{fig_2_c} with the set of colors $C=\{a, b, c\}$, where $a$, $b$, and $c$ represent red, green, and blue colors, respectively.

\begin{figure}[htbp]
 \centering
	\subfigure[Incidence graph of tetrahedron.]{
		\includegraphics[scale=1.0]{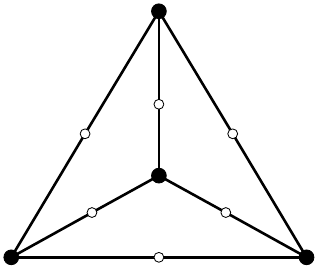}
		\label{fig_2_a}
		} 
 \quad
	\subfigure[Consistent coloring of tetrahedron.]{
		\includegraphics[scale=1.0]{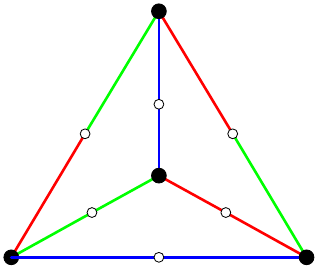}
		\label{fig_2_b}
		}
 \quad
	\subfigure[Proper coloring of tetrahedron.]{
		\includegraphics[scale=1.0]{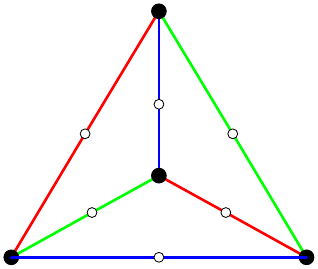}
		\label{fig_2_c}
		}
\caption{Complex coloring of the tetrahedron graph.} 
\label{fig_2}
\end{figure}

\subsection{Kempe Walks and Variable Eliminations}
In a consistently colored incidence graph, an $(\alpha,\beta)$ Kempe path, or simply $(\alpha,\beta)$ path, where $\alpha,\beta \in C$ and $\alpha \neq \beta$, is a sequence of adjacent links $l_1$, $l_2$, $\ldots$, $l_{n-1}$, $l_n$ such that $\sigma(l_i)\in{\alpha, \beta}$ for $i=1$, $\ldots$, $n$. The vertices contained in the path are called \textbf{\textit{interior}} vertices of the path. There are two types of maximal $(\alpha, \beta)$ paths in a bridgeless cubic graph:
\begin{enumerate}
	\item \textbf{\textit{$(\alpha, \beta)$ cycle}}: The two end-links $l_1$ and $l_n$ are adjacent to each other.
	\item \textbf{\textsl{$(\alpha, \beta)$ open path}}: The two end-links $l_1$  and $l_n$ are not adjacent to each other, and both ends of the path are fictitious vertices.
\end{enumerate}	
An $(\alpha, \beta)$ variable edge is always contained in a maximal $(\alpha, \beta)$ path, either an $(\alpha, \beta)$ cycle or an $(\alpha, \beta)$ open path. Variable eliminations can be achieved by the binary color exchange operation $\otimes$ performed on two adjacent colored edges ${\vec e}_{j,i}=(c_{j,i},c_{i,j})$  and ${\vec e}_{i,k}=(c_{i,k},c_{k,i})$, which is defined as follows:
\begin{equation}
(c_{j,i},c_{i,j})\otimes (c_{i,k},c_{k,i})=(c_{j,i}, \beta)\otimes(\alpha,c_{k,i})\Rightarrow (c_{j,i},\alpha)\circ(\beta,c_{k,i}).
\end{equation}
If the two adjacent edges are variables ${\vec e}_{j,i}=(\alpha, \beta)$ and ${\vec e}_{i,k}=(\alpha, \gamma)$, then the following color exchange operation: 
\begin{equation}
{\vec e}_{j,i}\otimes {\vec e}_{i,k}=(\alpha, \beta)\otimes (\alpha, \gamma)\Rightarrow (\alpha, \alpha)\circ (\beta, \gamma)
\end{equation}
can eliminate one of these variables. In general, variable eliminations require a sequence of color exchanges to move one variable to another variable along a two-colored Kempe path.

The Kempe walk of a $(\alpha,\beta)$ variable on a $(\alpha,\beta)$ path is a sequence of color exchange operations performed on its interior vertices. Examples of variable eliminations by Kempe walks are provided in Fig.~\ref{fig_3}. Consider the $(a,b)$ path $(a,b)\circ (a,a)\circ (b,a)$ shown in Fig.~\ref{fig_3}. The variable ${\vec e}_1=(a,b)$ can walk to another variable ${\vec e}_2=(b,a)$ by the following sequence of color exchanges performed on its interior vertices:
\begin{equation}
(a,b)\otimes (a,a)\circ (b,a) \Rightarrow (a,a)\circ(b,a)\otimes (b,a)\Rightarrow (a,a)\circ (b,b)\circ (a,a),
\end{equation}
in which two variables are eliminated by color exchanges. 

\begin{figure}[htbp]
\centering
\includegraphics[scale=0.8]{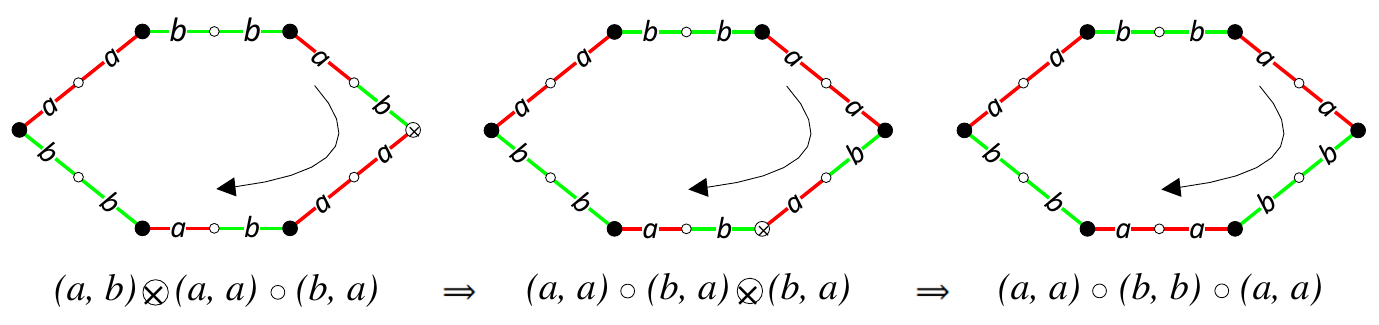}
\caption{Variable eliminations by Kempe walk.}
\label{fig_3}
\end{figure}

Another useful application of the Kempe walk is to negate a variable. The \textbf{\textit{negation}} of a variable $(\alpha, \beta)$, denoted as $-(\alpha, \beta)=(\beta, \alpha)$, represents a color vector in the opposite direction of $(\alpha, \beta)$. The negation requires the color conversion of the entire maximal $(\alpha, \beta)$ path $H$ that contains the variable $(\alpha, \beta)$. That is, the color inversion of the $(\alpha, \beta)$ path $H$ involves a sequence of color exchanges of all interior vertices of $H$. The negation only applies to a maximal $(\alpha, \beta)$ path, either a $(\alpha, \beta)$ cycle or an open $(\alpha, \beta)$ path; otherwise, the operation may introduce new variables.

\section{Decomposition of a Configuration}
\label{sec3}
As the previous section shows, variables can move along two-colored alternate paths via color exchange operations, called \textbf{\textit{Kempe walks}}, and they will cancel each other while moving around the graph. The problem is solved if all variables are eliminated and a proper 3-edge colored cubic graph $G$ is reached; otherwise, $G$ is a Class 2 cubic graph, called a snark. Our study of edge coloring of cubic graphs starts with Petersen’s theorem stated as follows.

\begin{theorem}[Petersen]
Every bridgeless cubic graph contains a perfect matching.
\end{theorem}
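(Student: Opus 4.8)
The plan is to invoke Tutte's 1-factor theorem, which characterizes the existence of a perfect matching: a graph $G$ has a perfect matching if and only if $o(G-S)\le |S|$ for every vertex subset $S\subseteq V$, where $o(G-S)$ denotes the number of connected components of $G-S$ with an odd number of vertices. It therefore suffices to verify this deficiency condition for an arbitrary $S$, and the whole argument reduces to a counting estimate that exploits both $3$-regularity and the bridgeless hypothesis.

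First I would fix an arbitrary $S\subseteq V$ and consider the odd components $C_1,\dots,C_m$ of $G-S$. For each $C_i$, write $\partial C_i$ for the set of edges joining $C_i$ to $S$; by definition of a component there are no edges from $C_i$ to any other $C_j$, so $\partial C_i$ accounts for all edges leaving $C_i$. Summing vertex degrees over $C_i$ gives $\sum_{v\in C_i}\deg(v)=3|C_i|$, which is odd because $|C_i|$ is odd. Since every edge internal to $C_i$ contributes exactly $2$ to this sum, a parity comparison forces $|\partial C_i|$ to be odd.

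Next I would sharpen this bound using the bridgeless assumption. Because $G$ has no bridge, the component $C_i$ cannot be attached to the rest of $G$ by a single edge, so $|\partial C_i|\neq 1$; together with the parity conclusion this yields $|\partial C_i|\ge 3$ for every odd component. I would then finish with a double count of the edges running between $\bigcup_{i} C_i$ and $S$: this number is at least $3m$ by the per-component bound, while every such edge is incident to a vertex of $S$ of degree exactly $3$, so it is at most $3|S|$. Hence $3m\le 3|S|$, i.e. $o(G-S)=m\le|S|$. As $S$ was arbitrary, Tutte's condition holds and $G$ has a perfect matching.

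The step I expect to carry the real content is the lower bound $|\partial C_i|\ge 3$, where the parity argument (giving oddness) and the bridgeless hypothesis (ruling out the value $1$) must be combined; the surrounding double count is routine. One caveat worth flagging is that this streamlined argument relies on Tutte's theorem, which postdates Petersen's original $1891$ proof; if a self-contained treatment within the complex-coloring framework of this paper is preferred, one would instead have to establish the existence of the matching directly, which is considerably more involved.
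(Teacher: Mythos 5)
Your proof is correct: the parity argument gives $|\partial C_i|$ odd, the bridgeless hypothesis rules out $|\partial C_i|=1$, and the double count over edges meeting $S$ yields Tutte's condition $o(G-S)\le|S|$. This is exactly the route the paper takes, since it offers no proof of its own but states that the theorem ``can be proved by an application of the Tutte theorem,'' which is precisely the argument you have carried out in detail.
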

This theorem first appeared in \cite{Petersen1898}. Today, it can be proved by an application of the Tutte theorem \cite{chartrand2011graphs, west2001introduction}. In a bridgeless cubic graph $G$ with a perfect matching, the edges that are not in the perfect matching form a set of disjoint cycles, called \textbf{\textit{Tait cycles}}. For any Petersen's perfect matching, assigning color $c$ to the edges in the perfect matching and color $a$ or $b$ to the links in these Tait cycles, we can obtain a complex coloring $T(G)$ of graph $G$ that only contains $(a,b)$ variables. It is easy to show that all $(a,b)$ variables are contained in odd $(a,b)$ Tait cycles, and every odd $(a,b)$ Tait cycle contains exactly one $(a,b)$ variable in such a complex coloring. The complex coloring $T(G)$ corresponding to the Petersen's perfect matching is called a \textbf{\textit{configuration}} of graph $G$ in this paper.

As an example, two configurations of a cubic planar graph $G$ are depicted in Fig.~\ref{fig_4}. The configuration shown in Fig.~\ref{fig_4_a} has two $(a,b)$ variables respectively contained in two disjoint odd $(a,b)$ cycles. The properly colored configuration shown in Fig.~\ref{fig_4_b} contains one even $(a,b)$ cycle, one even $(b,c)$ cycle, and two even $(a,c)$ cycles. 

\begin{figure}[htbp]
 \centering
 \subfigure[Two disjoint odd $(a,b)$ cycles.]{
  \includegraphics[scale=0.8]{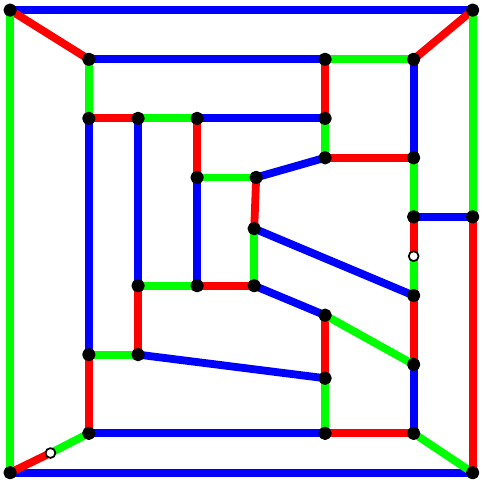}
   \label{fig_4_a}
   } \qquad
 \subfigure[$(a,b)$, $(b,c)$, and $(a,c)$ even cycles.]{
  \includegraphics[scale=0.8]{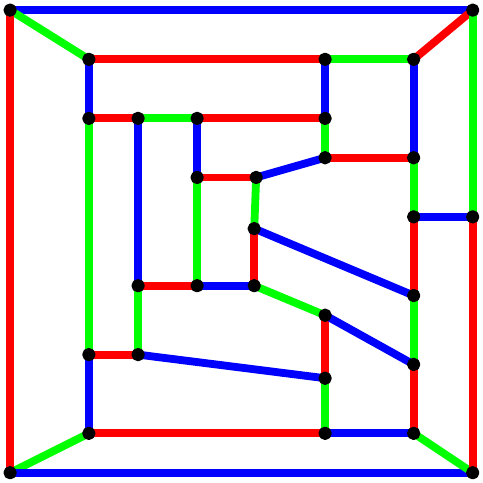}
   \label{fig_4_b}
   }
 \caption{Complex coloring configurations of a cubic planar graph $G$.}  
\label{fig_4}
\end{figure}

For any given Petersen's perfect matching of a bridgeless cubic graph $G$, each $(a,b)$ Tait cycle can be colored in two different ways. One coloring of the $(a,b)$ Tait cycle is the negation of the other, meaning exchange the two $a$, $b$ colors. And each variable ${\vec e}=(a,b)$ can be located at any edge of the odd cycle that contains $\vec e$. Therefore, different color assignments corresponding to the same Petersen's perfect matching form an equivalent class, and they are considered as the same configuration $T(G)$ but in different $a$, $b$ coloring \textbf{\textit{states}}. That is, a configuration $T(G)$ of graph $G$ is uniquely determined by the set of $(c,c)$ edges, a Petersen's perfect matching, while the state of $T(G)$ is determined by $a$ and $b$ links. Let $\tau_o$ and $\tau_e$ be the respective number of odd and even $(a,b)$ Tait cycles in $T(G)$, and $S_{T(G)}$ denote the set of states of $T(G)$. Then we have:
\begin{equation}
|S_{T(G)}|=2^\tau n_1 n_2 \cdots n_{\tau_o},
\end{equation}
where $\tau=\tau_o+\tau_e$ is the total number of $(a, b)$ Tait cycles, and $n_i$ is the length of $i$-th odd $(a,b)$ Tait cycles, for $i=1$, $\ldots$, $\tau_o$, in the configuration $T(G)$. That is, there is a one-to-one correspondence between perfect matchings and configurations of bridgeless cubic graphs.

In a configuration $T(G)$ of a bridgeless cubic graph $G$, any two maximal two-colored sub-graphs $H_1(\alpha,\beta)$ and $H_2(\alpha,\beta)$, for $\alpha,\beta \in C=\{a,b,c\}$, must be vertex disjoint, which implies that any configuration can be decomposed into a set of maximal two-colored $(a,b)$, $(b,c)$ and $(c,a)$ sub-graphs. There are five such kinds of sub-graphs in a configuration, and they are listed as follows:
\begin{itemize}
	\item $(a,b)$ Tait cycles, they can be either odd or even, and each odd Tait cycle contains an $(a,b)$ variable. 
	\item $(a,c)$ and $(b,c)$ even cycles.
	\item $(a,c)$ and $(b,c)$ open paths connecting $(a,b)$ variables. 
\end{itemize}

The collection of these maximal two-colored sub-graphs covers each link of graph $G$ exactly twice, as illustrated by the configuration $T(G)$ of a cubic graph $G$ shown in Fig.~\ref{fig_5_a}. The properties of these maximal two-colored sub-graphs are described as follows:

\begin{enumerate}
	\item \textit{Locking Cycle}
The odd Tait cycle $H(a,b)$ that contains an $(a,b)$ variable is called a \textbf{\textit{locking cycle}}. Two $(a,b)$ locking cycles of the configuration $T(G)$ are shown in Fig.~\ref{fig_5_b}.

\item \textit{Exclusive Chain}
The open $(a,c)$ and $(b,c)$ paths connecting two $(a,b)$ variables are called \textbf{\textit{exclusive chains}}. Negating an $(a,c)$ exclusive chain will change the two end $(a,b)$ variables into two $(b,c)$ variables. Similarly, color inverting of a $(b,c)$ exclusive chain will change the two ends into two $(a,c)$ variables. Therefore, the color inversion of any exclusive chain is prohibited, and the name exclusive chain implies that the two end variables are mutually exclusive by the chain. The $(a,c)$ exclusive chain and the $(b,c)$ exclusive chain of $T(G)$ are shown in Fig.~\ref{fig_5_c} and \ref{fig_5_f}, respectively.

\begin{figure}[tbp]
 \centering
\begin{tabular}{ccc}
	\subfigure[Configuration $T(G)$.]{
		\includegraphics[scale=0.7]{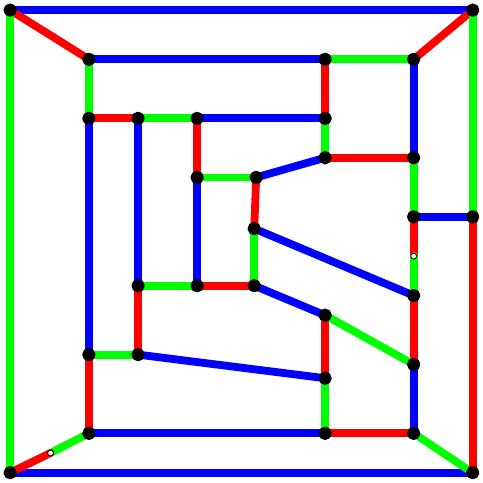}
		\label{fig_5_a}
		}
	&\subfigure[Two locking $(a,b)$ cycles.]{
		\includegraphics[scale=0.7]{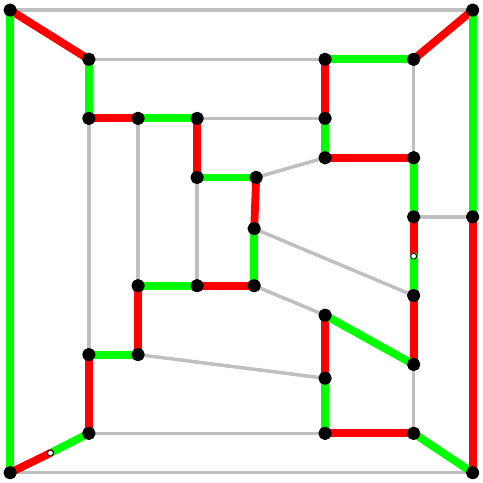}
		\label{fig_5_b}
		}
	 \\
	\subfigure[The $(a,c)$ exclusive chain.]{
		\includegraphics[scale=0.7]{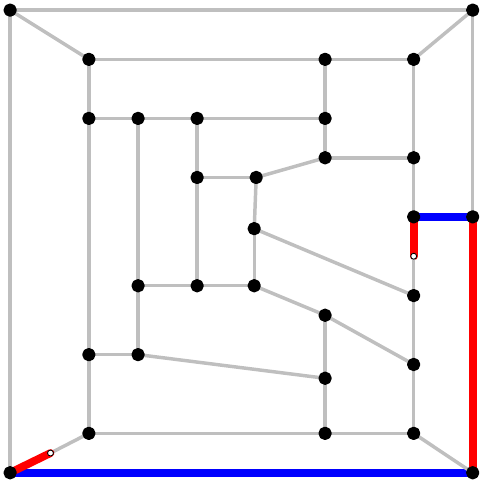}
		\label{fig_5_c}
		} 
	&\subfigure[An essential $(a,c)$ cycle.]{
		\includegraphics[scale=0.7]{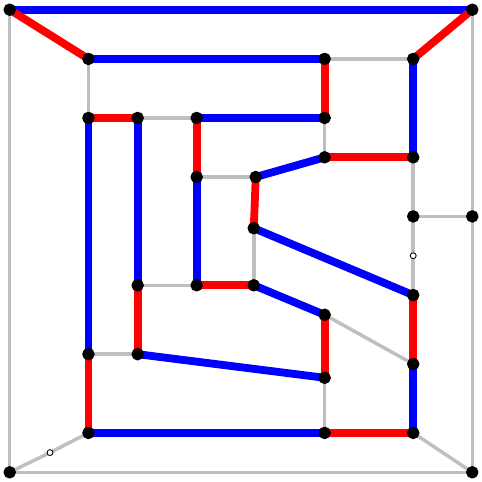}
		\label{fig_5_d}
		} 
	&\subfigure[Two even $(a,b)$ cycles after negating $(a,c)$ cycle.]{
		\includegraphics[scale=0.7]{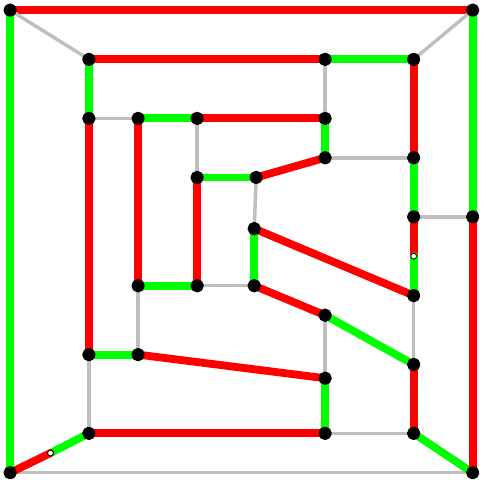}
		\label{fig_5_e}
		} 
	\\
	\subfigure[The $(b,c)$ exclusive chain.]{
		\includegraphics[scale=0.7]{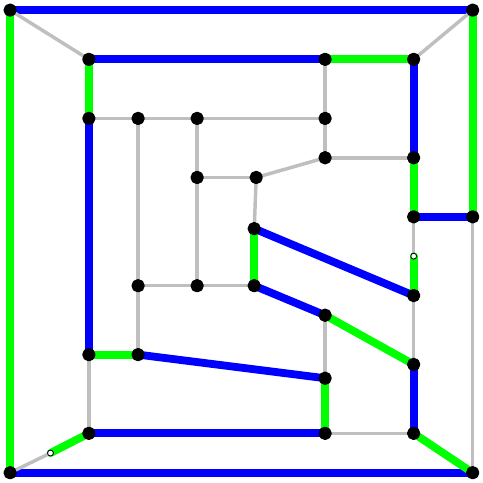}
		\label{fig_5_f}
		}  
	&\subfigure[A nonessential $(b,c)$ cycle.]{
		\includegraphics[scale=0.7]{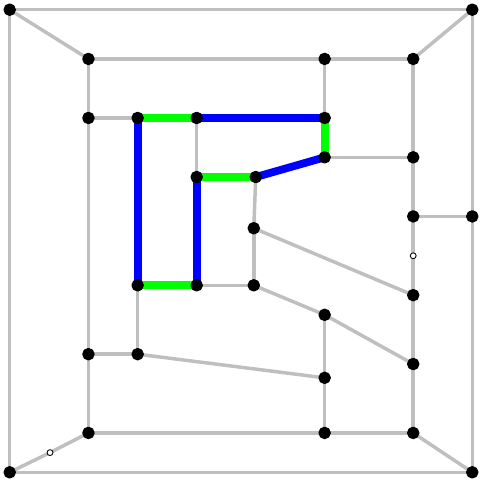}
		\label{fig_5_g}
		} 
	&\subfigure[Two odd $(a,b)$ cycles after negating $(b,c)$ cycle.]{
		\includegraphics[scale=0.7]{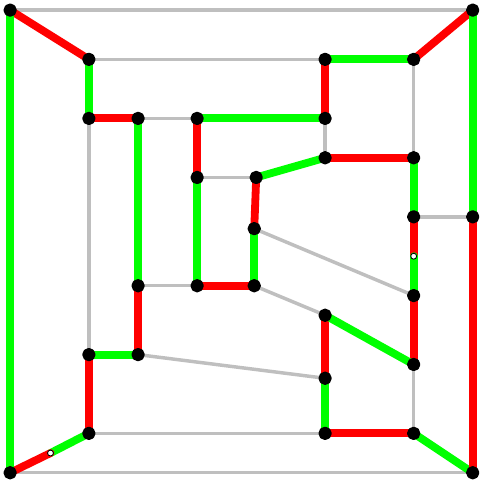}
		\label{fig_5_h}
		}
\end{tabular}
 \caption{Decomposition of a complex coloring configuration $T(G)$.}  
\label{fig_5}
\end{figure}

\item \textit{Resolution Cycle}
A variable contained in a locking cycle $H(a,b)$ can be released by negating another cycle $R_1(a,c)$ which shares some $(a,a)$ edges with $H(a,b)$. After inverting the colors in cycle $R_1(a,c)$, the shared $(a,a)$ edges become $(c, c)$ edges, which deform the original locking cycle $H(a,b)$. Thus, cycle $R_1(a,c)$ is called a \textbf{\textit{resolution cycle}} of the $(a,b)$ variable locked in the locking cycle $H(a,b)$. 

After inverting the colors of the resolution cycle $R_1(a,c)$, if the locked $(a,b)$ variable can be canceled with another $(a,b)$ variable by a walk on a newly created Kempe path, then $R_1(a,c)$ is called an \textbf{\textit{essential cycle}}. Otherwise, it is a \textbf{\textit{nonessential cycle}}. Similarly, a cycle $R_2(b,c)$ that shares some $(b,b)$ edges with $H(a,b)$ also serve as a resolution cycle of the $(a,b)$ variable contained in $H(a,b)$.

\end{enumerate}

The decomposition of a configuration $T(G)$ of graph $G$ is illustrated by the running example shown in Fig.~\ref{fig_5}, from which we observe the following additional properties of these maximal two-colored sub-graphs of $T(G)$:
\begin{enumerate}
	\item The $(a,c)$ exclusive chain and the $(a,c)$ resolution cycle are disjoint, as shown in Fig.~\ref{fig_5_c} and \ref{fig_5_d}, respectively, but their union consists of all $a$ and $c$ links of $T(G)$. 
	\item A similar property holds for $b$ and $c$ links of $T(G)$, as shown in Fig.~\ref{fig_5_f} and \ref{fig_5_g}. 
	\item The collection of all links in the two locking $(a,b)$ cycles, the $(a,c)$ exclusive chain, the $(b,c)$ exclusive chain, the $(a,c)$ resolution cycle, and the $(b,c)$ resolution cycle includes every link in $T(G)$ exactly twice.
\end{enumerate}

Next, we consider the negation of the two resolution cycles, one $(a,c)$ cycle and one $(b,c)$ cycle, of the configuration $T(G)$, shown in Fig.~\ref{fig_5_d} and \ref{fig_5_g}, respectively. If we negate the $(a,c)$ cycle, the resulting configuration has two even $(a,b)$ cycles, as shown in Fig.~\ref{fig_5_e}, and the two $(a,b)$ variables contained in the same $(a,b)$ cycle can be easily eliminated by a Kempe walk. Thus, the $(a,c)$ resolution cycle displayed in Fig.~\ref{fig_5_d} is an essential cycle.

On the other hand, the coloring resulting from the negation of the $(b,c)$ cycle still has two disjoint odd $(a,b)$ cycles, each of which contains an $(a,b)$ variable, as shown in Fig.~\ref{fig_5_h}. Therefore, the $(b,c)$ resolution cycle displayed in Fig.~\ref{fig_5_g} is a nonessential cycle.

\section{Reducibility of a Configuration}
\label{sec4}
A configuration $T(G)$ of a bridgeless cubic graph $G$ is uniquely determined by the $(c,c)$ edges in Petersen's perfect matching, but the states of $T(G)$ are determined by $a$ and $b$ links. Thus, the following two operations will cause state transitions of $T(G)$ but not change the configuration itself: 
	\begin{itemize}
		\item Negate any $(a,b)$ cycle, either even or odd.
		\item Move any $(a,b)$ variable within its locking cycle.
	\end{itemize}
Notice that the state transitions due to the above two operations will retain the sub-graphs of all $(a,b)$ cycles intact; however, they will change $(a,c)$ and $(b,c)$ exclusive chains and resolution cycles.

Let $S_{T(G)}$ denote the set of all states of $T(G)$. We say that a state $\xi \in S_{T(G)}$ is \textbf{\textit{reducible}} if one of the even $(a,c)$ or $(b,c)$ cycles in the state $\xi$ is essential; otherwise, the state $\xi \in S_{T(G)}$ is \textbf{\textit{irreducible}}. If the state $\xi$ is irreducible, then we can implement the above state transition operations to change the state of $T(G)$ to another state. Since the number of states is finite, given by $|S_{T(G)}|=2^\tau n_1 n_2 \ldots n_{\tau_o}$, it is feasible to verify the reducibility of all states of $T(G)$ in a systematic manner by a deterministic algorithm.

A configuration $T(G)$ is \textbf{\textit{reducible}} if one of the states $\xi \in S_{T(G)}$ is reducible, meaning that two $(a,b)$ variables in $T(G)$ can be eliminated by a Kempe walk after negating an essential cycle in the state $\xi$. The resulting configuration $T'(G)$ is a proper coloring of graph $G$ if it no longer contains any remaining $(a,b)$ variables.

By contrast, a configuration $T(G)$ is \textbf{\textit{irreducible}} if all states are irreducible. An irreducible configuration $T(G)$ does not imply that the underlying cubic graph is Class 2, because configuration $T(G)$ can be transformed into another configuration $T'(G)$ by negating some even $(a,c)$ or $(b,c)$ cycles in $T(G)$. 

By collecting the above discussions, as illustrated by the transition diagram shown in Fig.~\ref{fig_6}, we can repeatedly use the following two operations to find a reducible configuration:

\begin{figure}[htbp]
  \centering
  \subfigure[Transitions of configurations.]{
     \includegraphics[scale=0.9]{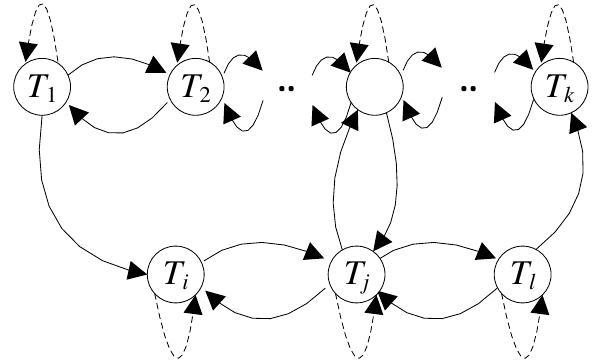}
     \label{fig_6_a}
     } 
  \subfigure[Closed set of irreducible configurations.]{
     \includegraphics[scale=0.9]{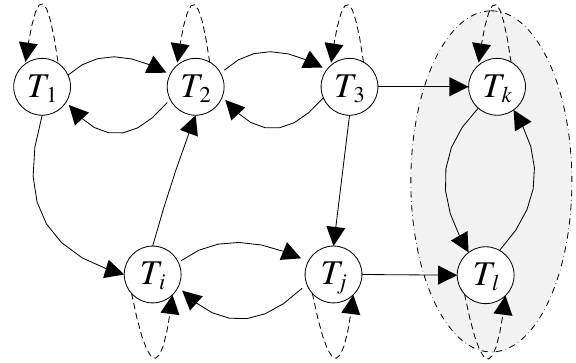}
     \label{fig_6_b}
     }
	\subfigure{\includegraphics[scale=0.9]{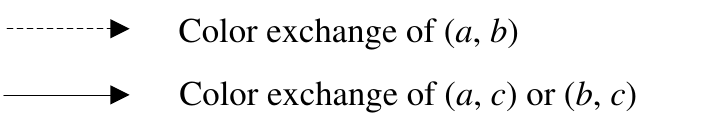}}
  \caption{The transition diagram of configurations.}
	\label{fig_6}
\end{figure}

\begin{enumerate}
	\item \textbf{\textit{Local operation}} - move to another state within the same configuration $T(G)$:
		
Local operations only involve $(a,b)$ color exchanges, which leave all $(c,c)$ edges in the Petersen perfect matching intact. We can systematically move $(a,b)$ variables within their $(a,b)$ locking cycles until the configuration reaches a reducible state. The maximum number of moves within the state space $S_{T(G)}$ is bounded by $|S_{T(G)} |=2^\tau n_1 n_2 \ldots n_{\tau_o}$.

	\item \textbf{\textit{Global operation}} - transform $T(G)$ into another configuration $T'(G)$:
		
Global operations involve $(a,c)$ and $(b,c)$ color exchanges, which will change the Petersen perfect matching that determines the configuration. If configuration $T(G)$ is reducible, then it will transform into another configuration $T'(G)$ after eliminating two $(a,b)$ variables. On the other hand, if configuration $T(G)$ is irreducible, then we can transform $T(G)$ into another configuration $T'(G)$ by negating some even $(a,c)$ or $(b,c)$ cycles in $T(G)$. The process of searching a reducible configuration is a random walk in the entire space of configurations of the cubic graph $G$. 
\end{enumerate}

\begin{figure}[tbp]
 \centering
 \subfigure[An irreducible configuration $T(G)$.]{
  \includegraphics[scale=0.7]{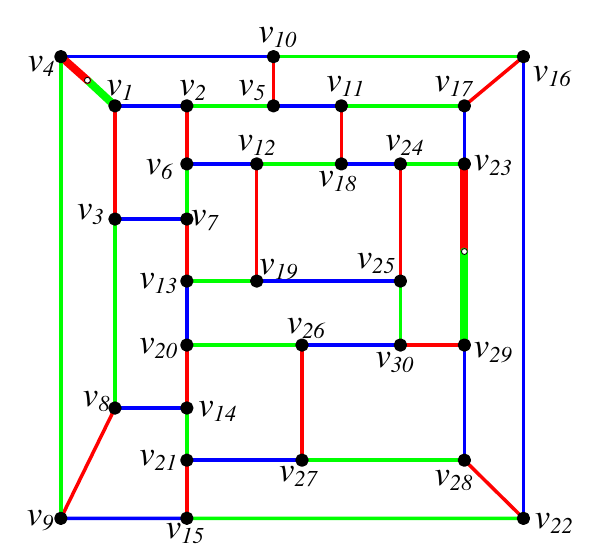}
   \label{fig_7_a}
   } \qquad
 \subfigure[Move an $(a,b)$ variable to edge $(v_1,v_3)$.]{
  \includegraphics[scale=0.7]{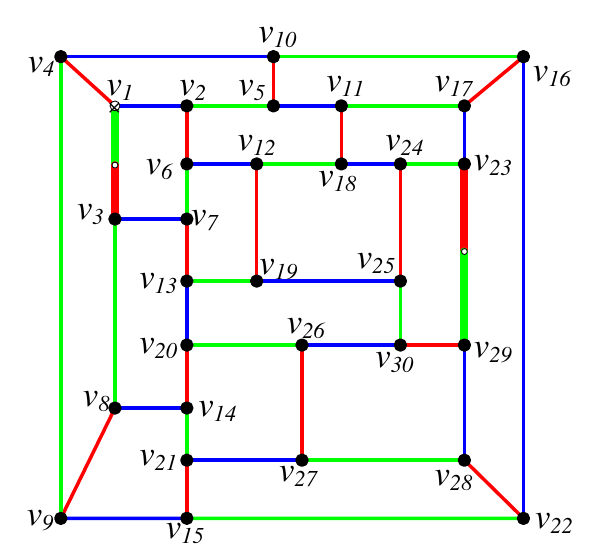}
   \label{fig_7_b}
   } \qquad
	\subfigure[Negate the highlighted $(a,c)$ cycle.]{
  \includegraphics[scale=0.7]{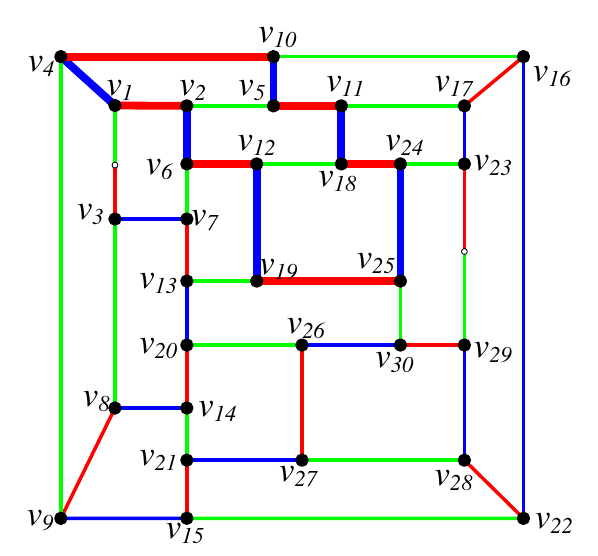}
   \label{fig_7_c}
   } \qquad
 \subfigure[Negate the highlighted $(b,c)$ cycle.]{
  \includegraphics[scale=0.7]{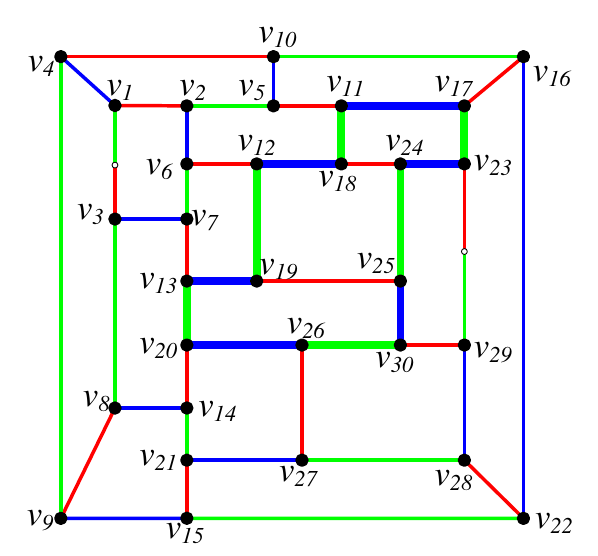}
   \label{fig_7_d}
   } \qquad
	\subfigure[Two $(a,b)$ variables are connected by a Kempe path.]{
  \includegraphics[scale=0.7]{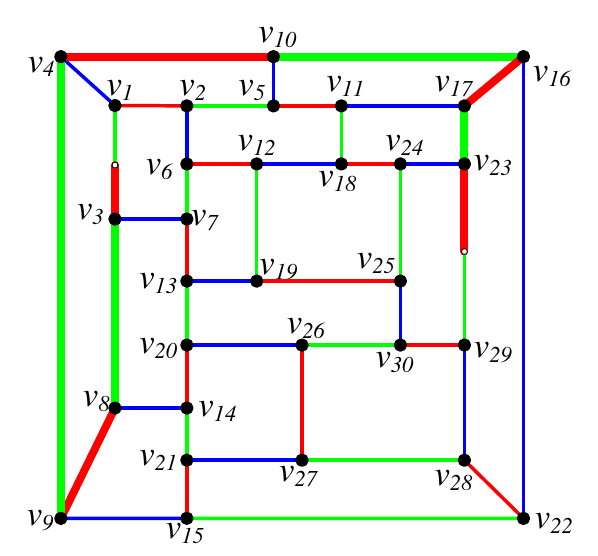}
   \label{fig_7_e}
   }
 \caption{Transforming an irreducible configuration into a reducible configuration.}  \label{fig_7}
\end{figure}

An irreducible configuration $T(G)$ may transform into a reducible configuration $T'(G)$ by a sequence of negating $(a,c)$ or $(b,c)$ cycles. For example, the following sequence of operations will eliminate the two $(a,b)$ variables in the irreducible configuration $T(G)$ shown in Fig.~\ref{fig_7_a}:

\begin{itemize}
		\item Move one of the $(a,b)$ variables to the edge $(v_1,v_3)$, as shown in Fig.~\ref{fig_7_b}.
		\item Negate the $(a,c)$ cycle $(v_1-v_2-v_6-v_{12}-v_{19}-v_{25}-v_{24}-v_{18}-v_{11}-v_5-v_{10}-v_4-v_1)$, as highlighted in Fig.~\ref{fig_7_c}. 
		\item Negate the $(b,c)$ cycle $(v_{11}-v_{18}-v_{12}-v_{19}-v_{13}-v_{20}-v_{26}-v_{30}-v_{25}-v_{24}-v_{23}-v_{17}-v_{11})$, as highlighted in Fig.~\ref{fig_7_d}. 
		\item The two variables are now connected by the $(a,b)$ path $(v_{23}-v_{17}-v_{16}-v_{10}-v_4-v_9-v_8-v_3)$, as shown in Fig.~\ref{fig_7_e}, and they can be eliminated by a Kempe walk. 
\end{itemize}

The bridgeless cubic graph $G$ that cannot have a proper 3-edge coloring is commonly referred to as a snark \cite{Belcastro2012, Holton1993}. The best known snark is the Petersen graph $G_P$, which is the smallest bridgeless cubic graph with no 3-edge coloring, as shown in Fig.~\ref{fig_8_a}. It can be easily shown that the configurations of the Petersen graph are all equivalent by relabeling vertices, meaning that they are isomorphic to each other. This unique configuration $T(G_P)$ of the Petersen graph is irreducible. Although $T(G_P)$ has $4\times 5\times 5=100$ states, but they are all isomorphic to one of the two states shown in Fig.~\ref{fig_8_b} and \ref{fig_8_e}, respectively, and neither state is reducible.

One of the irreducible states $\xi_1$ of $T(G_P)$ is shown in Fig.~\ref{fig_8_b}, in which the two $(a,b)$ variables are respectively contained in two disjoint $(a,b)$ cycles, and there are no $(a,c)$ or $(b,c)$ resolution cycles. The two $(a,c)$ and $(b,c)$ exclusive chains in state $\xi_1$ are displayed in Fig.~\ref{fig_8_c} and \ref{fig_8_d}, respectively.

Similarly, another irreducible state $\xi_2$ of $T(G_P)$ is shown in Fig.~\ref{fig_8_e}. Besides the two $(a,c)$ and $(b,c)$ exclusive chains, as displayed in Fig.~\ref{fig_8_f} and \ref{fig_8_g}, respectively, the state $\xi_2$ of $T(G_P)$ has one $(a,c)$ nonessential resolution cycle. 

\begin{figure}[tbp]
 \centering
		 \begin{tabular}{ccc}
		\ 
		& \subfigure[The Petersen graph.]{
				 \includegraphics[scale=0.8]{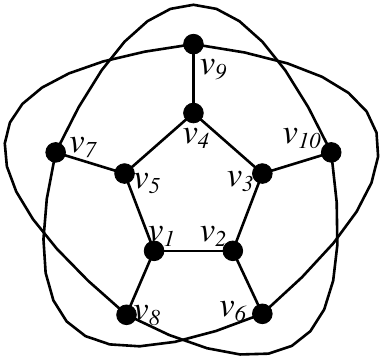}
				 \label{fig_8_a}
				 }
				&\ 
				 \\
			\subfigure[A state $\xi_1$ of the Petersen graph.]{
				\includegraphics[scale=0.8]{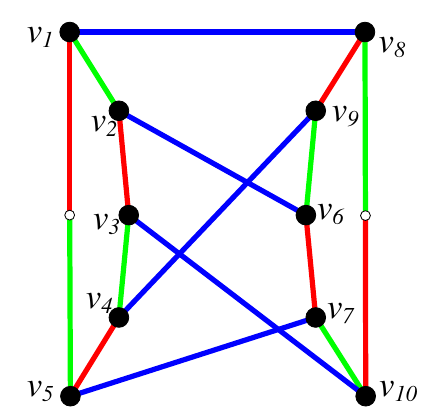}
				\label{fig_8_b}
				} 
				&\subfigure[$(a,c)$ sub-graph of $\xi_1$.]{
				\includegraphics[scale=0.8]{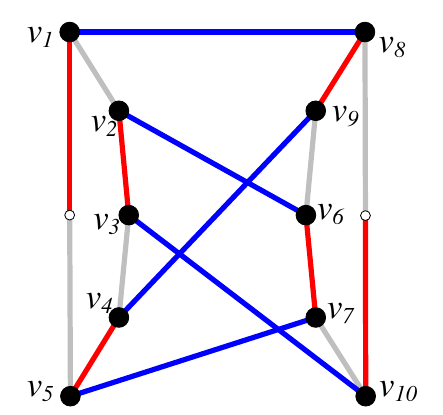}
				\label{fig_8_c}
				} 
			&\subfigure[$(b,c)$ sub-graph of $\xi_1$.]{
				\includegraphics[scale=0.8]{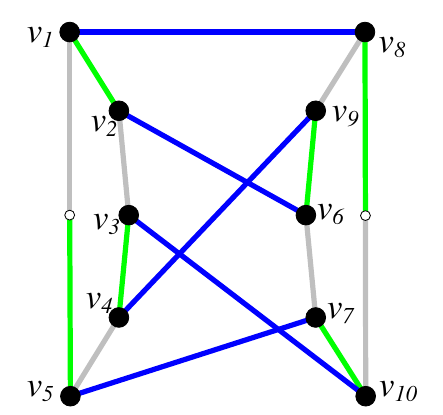}
				\label{fig_8_d}
				} \\
				\subfigure[A state $\xi_2$ of the Petersen graph.]{
				\includegraphics[scale=0.8]{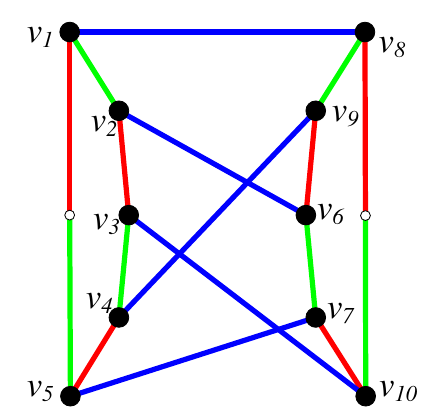}
				\label{fig_8_e}
				} 
			&\subfigure[$(a,c)$ sub-graph of $\xi_2$.]{
				\includegraphics[scale=0.8]{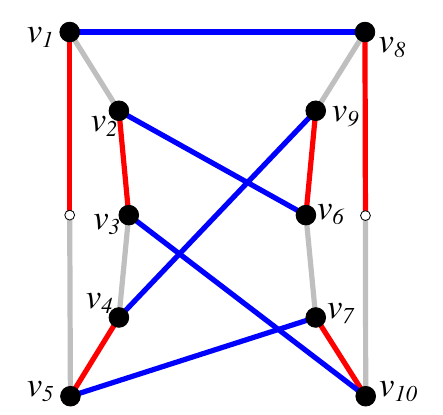}
				\label{fig_8_f}
				} 
				&\subfigure[$(b,c)$ sub-graph of $\xi_2$.]{
				\includegraphics[scale=0.8]{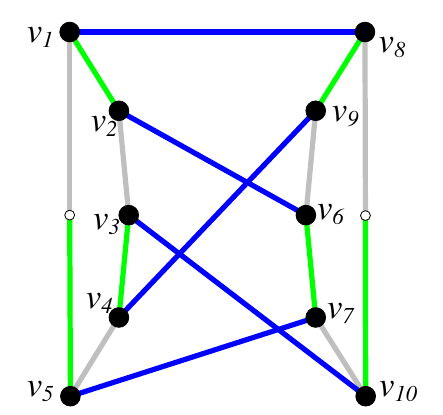}
				\label{fig_8_g}
				}
		\end{tabular}

  \caption{Two irreducible states of a configuration of the Petersen graph.}
  \label{fig_8}
\end{figure}

It is easy to show that not only are all configurations of the Petersen graph irreducible; in fact, they are isomorphic to each other. As an example, negating the $(a,c)$ cycle $(v_2-v_6-v_9-v_4-v_5-v_7-v_{10}-v_3-v_2)$ in the configuration $T(G_p)$ shown in Fig.~\ref{fig_9_a}, we obtain another configuration $T'(G_p)$ displayed in Fig.~\ref{fig_9_b}, which is the same as $T(G_p)$ by relabeling corresponding vertices of the Petersen graph. Obviously, these configurations are all irreducible because the two $(a,b)$ variables can never be connected by an $(a,b)$ Kempe path.

\begin{figure}[htbp]
 \centering
 \subfigure[A configuration $T(G_p)$ of the Petersen graph.]{
  \includegraphics[scale=0.8]{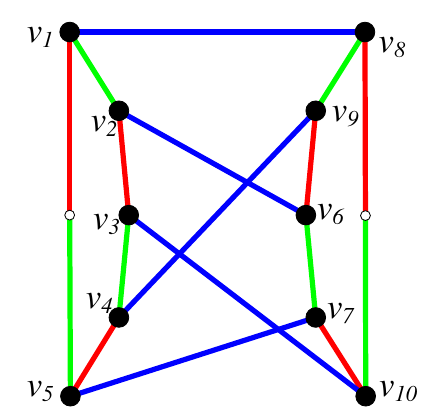}
   \label{fig_9_a}
   } \quad
 \subfigure[Negate $(a, c)$ resolution cycle.]{
  \includegraphics[scale=0.8]{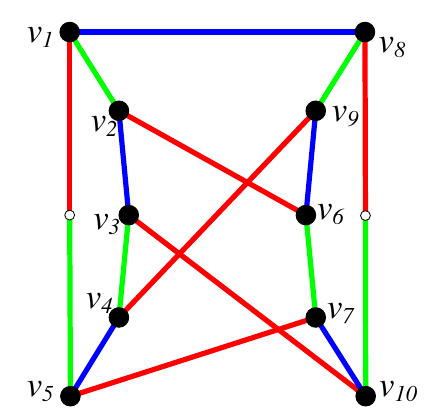}
   \label{fig_9_b}
   } \quad
\subfigure[Re-arrange to another configuration $T'(G_p)$.]{
  \includegraphics[scale=0.8]{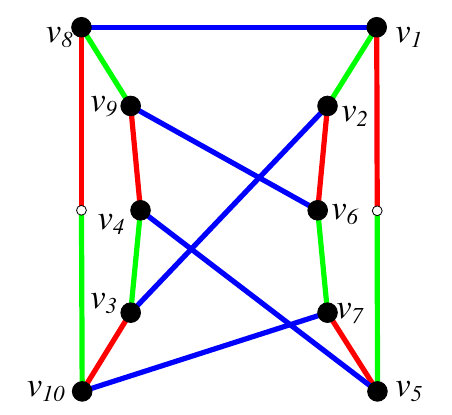}
   \label{fig_9_c}
   } \quad
 \caption{Isomorphic configuration of the Petersen graph.} 
\label{fig_9}
\end{figure}

In 1966, Tutte conjectured that every bridgeless cubic graph that does not contain the Petersen graph as a graph minor is 3-edge colorable, or equivalently, every snark has the Petersen graph as a graph minor \cite{Tutte196615}. Neil Robertson and Robin Thomas announced in 1996 that they proved this conjecture \cite{Belcastro2012, Robertson1997166, Thomas99recentexcluded}, but they did not publish the result. If this conjecture is valid, then the 4CT can be immediately established according to Tait's equivalent formulation.

Tutte's conjecture holds for almost all known snarks. The contraction processes of some well-known snarks are illustrated in Appendix \ref{appdx_A}. The Petersen graph as a graph minor is not a proper characterization of snarks. It is easy to show that many 3-edge colorable graphs also have the Petersen graph as a graph minor. For example, the 3-edge colorable cubic graph shown in Fig.~\ref{fig_10} is obtained by adding the edge $e_{11,12}=(v_{11},v_{12})$ to the Petersen graph, which certainly has the Petersen graph as a graph minor.

\begin{figure}[htbp]
 \centering
 \subfigure[Add an edge to the Petersen graph.]{
  \includegraphics[scale=0.7]{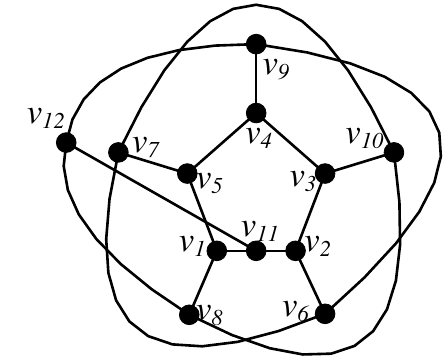}
   \label{fig_10_a}
   } 
 \qquad
 \subfigure[A 3-edge coloring of the modified Petersen graph.]{
  \includegraphics[scale=0.7]{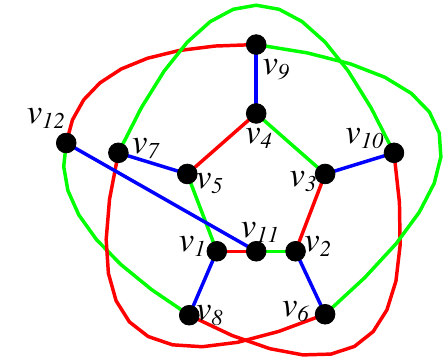}
   \label{fig_10_b}
   }
 \caption{A 3-edge colorable cubic graph with the Petersen graph as graph minor.} 
\label{fig_10}
\end{figure}

Since increasing the number of variables is prohibited in any transformation of a configuration, it is possible that a set of irreducible configurations is closed in a transition diagram under the transformations defined above, as illustrate in Fig. ~\ref{fig_6_b}, in which an irreducible configuration can only transform into other irreducible configurations. Therefore, an immediate consequence is the characterization of snarks given in the following theorem.

\begin{theorem}[]
A bridgeless cubic graph $G(V,E)$ is a Class 2 graph if and only if $G$ has a closed set of irreducible configurations.
\end{theorem}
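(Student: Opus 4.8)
The plan is to organize all configurations of $G$ into the transition diagram of Section~\ref{sec4} and to track a single monotone quantity along every admissible move. For a configuration $T=T(G)$ let $v(T)=\tau_o$ denote its number of $(a,b)$ variables, equivalently its number of odd $(a,b)$ Tait cycles; by the one-to-one correspondence between configurations and Petersen perfect matchings this is well defined, and since variables are canceled two at a time along a Kempe path, $v(T)$ is even. The two operations act on the diagram as follows: a local operation fixes $T$, hence fixes $v$; a global operation out of a \emph{reducible} configuration eliminates exactly two variables, so $v$ drops by $2$; and a global operation out of an \emph{irreducible} configuration only negates even $(a,c)$ or $(b,c)$ cycles, which is not a reduction, so by the prohibition on increasing variables it preserves $v$. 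Thus $v$ is non-increasing along every sequence of operations and strictly decreasing precisely when a reducible configuration is reduced. The reformulation I would record first is that $G$ is Class~1 if and only if some configuration has $v(T)=0$: a proper $3$-edge coloring is exactly a configuration all of whose Tait cycles are even.

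For the forward direction, suppose $G$ is Class~2, so no configuration attains $v=0$ and $m:=\min_{T}v(T)\ge 2$. I would take $S$ to be the set of all configurations with $v(T)=m$ and verify that $S$ is a closed set of irreducible configurations. Each $T\in S$ is irreducible, for a reduction would yield a configuration with $v=m-2<m$, contradicting minimality. Closure is then immediate: local operations fix $T$; and a global operation out of an irreducible $T\in S$ is not a reduction, while $v$ can never rise above $m$, so it preserves $v$ and lands again in $S$. Hence the level-$m$ configurations form exactly the closed, irreducible family depicted in Fig.~\ref{fig_6_b}.

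For the converse, suppose $G$ possesses a closed set $S$ of irreducible configurations. Every $T\in S$ is irreducible and therefore still carries variables, so $v(T)\ge 2$; by closure, every configuration reachable from $T$ again lies in $S$, is irreducible, and keeps $v\ge 2$. Thus no trajectory beginning inside $S$ can reach a configuration with $v=0$, so no proper coloring is reachable from $S$. The step I expect to be the main obstacle is precisely the passage from this \emph{local} trapping to the \emph{global} conclusion that $G$ admits no proper coloring at all: a priori a Class~1 graph could host a proper coloring in one basin while also harbouring a closed irreducible pocket in another, since reductions are irreversible and the lateral cycle-negating moves only connect configurations of equal $v$. To close this gap I would argue that the configuration space cannot split in this way — that on a $3$-edge-colorable graph every configuration can be driven down to $v=0$, so that no all-irreducible lateral component can persist at a level $\ge 2$. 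Establishing this non-existence of trapped pockets on colorable graphs is the crux; it is exactly the connectivity statement tying the reducibility of configurations to global colorability, and it is where the genuine combinatorial content (and the connection to the reducibility postulate of Section~\ref{sec5}) resides.
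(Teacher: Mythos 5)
Your forward direction is correct and is essentially the paper's own argument made rigorous. The paper states this theorem without proof, presenting it as ``an immediate consequence'' of the fact that no transformation can increase the number of variables; its only hint, the remark that the configurations in the closed set ``should all possess the same minimum number of variables,'' is exactly your construction: $S$ is the level set of the minimum value $m$ of $v(T)$, reducibility at level $m$ would contradict minimality, and lateral negations of even $(a,c)$ or $(b,c)$ cycles out of an irreducible configuration leave the $(a,b)$ variables untouched, hence fix $v$ and land back in $S$. (One small repair: $v(T)$ is even because the odd $(a,b)$ Tait cycles, together with the even ones, partition the even number $|V|$ of vertices, not because variables cancel in pairs; this changes nothing downstream.)

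The gap you identify in the converse is genuine, and since you do not close it, your proposal is not a complete proof --- but you should know that the paper does not close it either, and its implicit reasoning commits exactly the error you warn against: from the existence of a closed irreducible set it concludes that $G$ is Class~2, thereby conflating ``no proper coloring is reachable from $S$'' with ``no proper coloring exists.'' What is missing, in both your write-up and the paper, is precisely your ``no trapped pockets'' claim: that from every configuration of a Class~1 graph, some sequence of local moves and even-cycle negations reaches a reducible configuration. This is a nontrivial connectivity assertion about the transition diagram of Fig.~\ref{fig_6}, and the natural attack does not go through directly. Under the configuration/perfect-matching bijection, negating an even $(a,c)$ or $(b,c)$ cycle swaps the matching along an alternating cycle, and any two perfect matchings $M$, $M^*$ are linked by swaps along the cycles of their symmetric difference; but to enact such a swap as a legal negation, every non-matching edge of that alternating cycle must simultaneously carry color $(a,a)$ (or all of them $(b,b)$) in some state, and two such edges lying in different alternation classes of the same even Tait cycle admit no such state, since an even Tait cycle has only its two negation states. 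So the converse rests on an unproven, and possibly deep, connectivity property of the configuration space; your proposal is incomplete there, but it is more candid about this than the paper itself.
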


Note that, for the same reason stated above, these irreducible configurations in the closed set of the snark should all possess the same minimum number of variables.

\section{Petersen Configuration}
\label{sec5}
For edge coloring of bridgeless cubic planar graphs, we are interested in a particular configuration $P(G)$, referred to as the \textbf{\textit{Petersen configuration}}, which satisfies the following conditions:
	
	\begin{enumerate}
		\item The configuration $P(G)$ contains two $(a,b)$ variables.
		\item The two $(a,b)$ variables are on the boundary of a pentagon in some state $\xi$ of $P(G)$.
	\end{enumerate}	
It is easy to show that the above conditions imply that three edges of the pentagon are contained in the two odd $(a,b)$ Tait cycles, and the remaining two $(c,c)$ edges of the pentagon belong to the perfect matching. The state $\xi$ of $P(G)$ shown in Fig.~\ref{fig_11} satisfies both conditions.

\begin{figure}[htbp]
\centering
\includegraphics[scale=0.7]{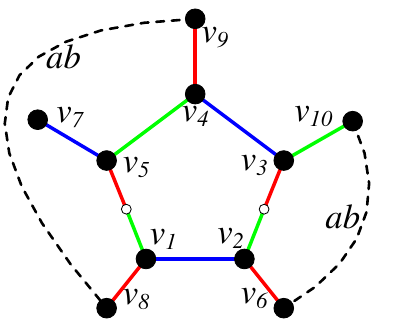}
\caption{A particular state $\xi$ of a Petersen configuration $P(G)$.}
\label{fig_11}
\end{figure}

The two $(a,b)$ variables in a Petersen configuration $P(G)$ can be transformed into two $(a,c)$ variables by color exchanges performed at the two vertices $v_1$ and $v_2$, as shown in Fig.~\ref{fig_12_a}. The result is displayed in Fig.~\ref{fig_12_b}, in which the two $(a,c)$ variables are contained in two disjoint $(a,c)$ cycles; otherwise, they can be easily canceled by a Kempe walk. Finally, the complete state $\xi$ of a Petersen configuration $P(G)$ under consideration is depicted in Fig.~\ref{fig_12_c}.

\begin{figure}[htbp]
 \centering
 \subfigure[Two disjoint $(a,b)$ cycles.]{
  \includegraphics[scale=0.7]{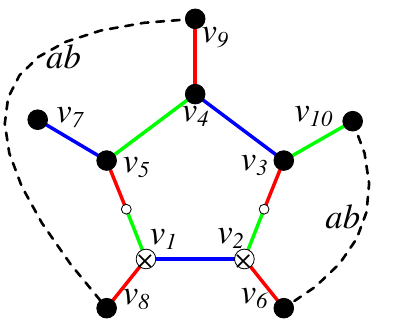}
   \label{fig_12_a}
   } \qquad
 \subfigure[Two disjoint $(a,c)$ cycles.]{
  \includegraphics[scale=0.7]{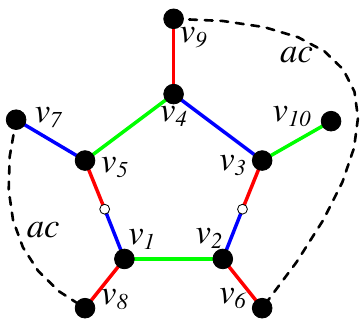}
   \label{fig_12_b}
   } \qquad
	\subfigure[The complete state $\xi$.]{
  \includegraphics[scale=0.7]{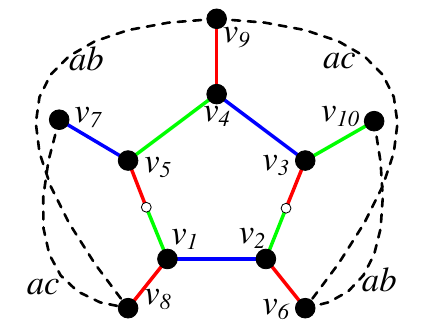}
   \label{fig_12_c}
   }
 \caption{The complete state of a Petersen configuration $P(G)$.}  
\label{fig_12}
\end{figure}

If we perform $(a,b)$ color exchange operations at vertices $v_2$, $v_4$ and $v_5$, as shown in Fig.~\ref{fig_13_a}, the result is displayed in Fig.~\ref{fig_13_b}, in which the right $(a,c)$ chain becomes an $(a,c)$ exclusive chain that connects the two $(a,b)$ variables, and the left $(a,c)$ chain becomes a part of the $(a,c)$ resolution cycle $(v_7-v_5-v_4-v_3-v_2-v_1-v_8-\cdots-v_7)$ that includes all vertices, $v_1$, $v_2$, $v_3$, $v_4$, and $v_5$, of the pentagon. If this $(a,c)$ resolution cycle is essential, then the two $(a,b)$ variables can be canceled, and the Petersen configuration $P(G)$ is reducible. On the other hand, if this $(a,c)$ resolution cycle is nonessential, then we can test other states of this Petersen configuration by moving $(a,b)$ variables or negating $(a,b)$ cycles.

\begin{figure}[htbp]
\centering
	\subfigure[$(a,b)$ color exchanges.]{
		\includegraphics[scale=0.7]{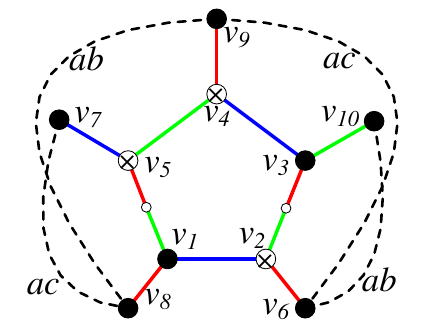}
		\label{fig_13_a}
		} \qquad
 \subfigure[$(a,c)$ resolution cycle and $(a,c)$ exclusive chain.]{
		\includegraphics[scale=0.7]{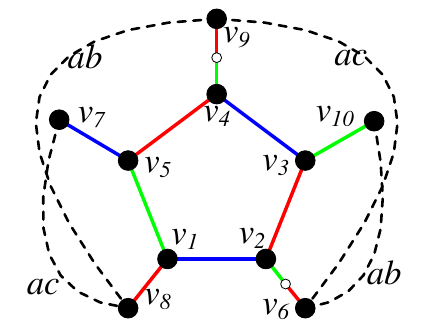}
		\label{fig_13_b}
		} \qquad
	\subfigure[The state $\xi$ of the Petersen graph.]{
		\includegraphics[scale=0.7]{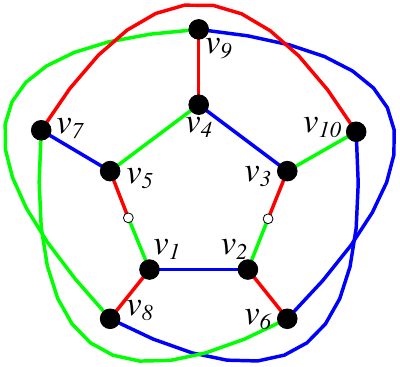}
		\label{fig_13_c}
		}
\caption{The Tait cycle structure of Petersen configuration $P(G)$.}
\label{fig_13}
\end{figure}

A Petersen configuration $P(G)$ is irreducible if all states of $P(G)$ are irreducible. The configuration of the Petersen graph shown in Fig.~\ref{fig_13_c} is the smallest irreducible Petersen configuration $P(G)$. For planar Petersen configurations, we have experimentally tested over one hundred thousand instances generated by computer, and found that they are all reducible. Since any two-colored paths of the same kind cannot cross each other in the plane, it is evident by our perception that an irreducible Petersen configuration $P(G)$ with the Tait cycle structure shown in Fig.~\ref{fig_13_b} must be non-planar. Therefore, we propose the following postulate: 

\begin{repos*}
Every Petersen configuration $P(G)$ of a bridgeless cubic planar graph $G(V,E)$ is reducible.
\end{repos*}

Despite the fact that this proposition has been verified by more than one hundred thousand instances, we still don't have a logical proof of this assertion. We observed the following properties of a Petersen configuration from experimental results:
\begin{enumerate}
\item The cardinality of the state space $S_{P(G)}$ of a Petersen configuration $P(G)$ is given by $|S_{P(G)}|=4 \times 2^{\tau_e} n_1 n_2$, where $\tau_e$ is the number of even $(a,b)$ cycles, and $n_1$ and $n_2$ are the respective numbers of vertices in the two odd $(a,b)$ cycles. In experimental testing of the reducibility of the Petersen configuration, we did not consider the negation of even $(a,b)$ cycles to simplify the computational complexity. Yet we were still able to find reducible states in the reduced state space of size $4 n_1 n_2$. However, in theory, it is not clear whether we can always ignore the negation of even $(a,b)$ cycles.

\item For each Petersen configuration $P(G)$, there is a companion configuration $P'(G)$ that contains two $(a,c)$ variables, as shown in Fig.~\ref{fig_12_b}. Thus, the configuration $P(G)$ can be considered as reducible if any state of $P'(G)$ is reducible. In our experimental testing, however, we have never encountered any case in which the configuration $P(G)$ is irreducible but the companion configuration $P'(G)$ is reducible. 
\end{enumerate}

Furthermore, our experimental results show that there are usually many reducible states in a Petersen configuration $P(G)$ of a bridgeless cubic planar graph $G$. As an example, a reducible $P(G)$ of the planar graph $G$ is shown in Fig.~\ref{fig_14}. The pentagon $(v_{13}-v_7-v_6-v_{12}-v_{19}-v_{13})$ of graph $G$ is highlighted in Fig.~\ref{fig_14_a}, in which the two $(a,b)$ variables are located at edge $(v_7,v_{13})$ and $(v_{12},v_{19})$, respectively. The two locking $(a,b)$ cycles with length 9 and 21, respectively, are displayed in Fig.~\ref{fig_14_b}. This configuration $P(G)$ has $4\times 9\times 21=756$ states, among which 284 of them are reducible. 

\begin{figure}[htbp]
 \centering
 \subfigure[A Petersen configuration.]{
  \includegraphics[scale=0.8]{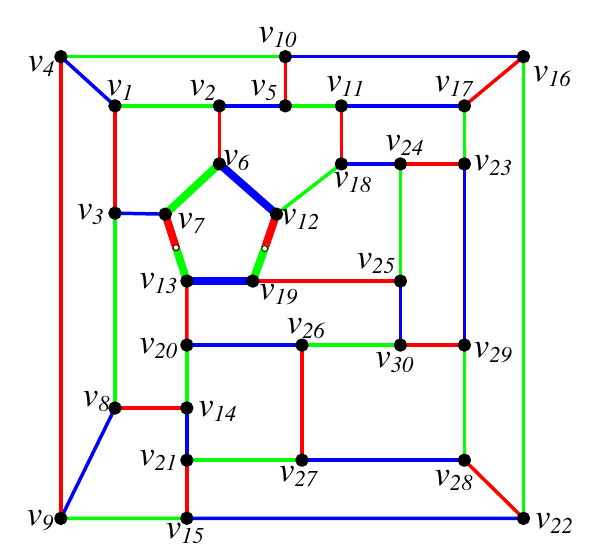}
   \label{fig_14_a}
   } \qquad
 \subfigure[Two highlighted $(a,b)$ locking cycles.]{
  \includegraphics[scale=0.8]{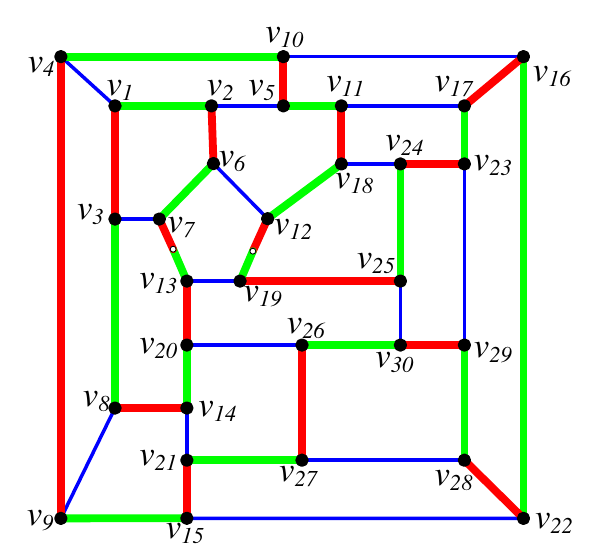}
   \label{fig_14_b}
   }
 \caption{A Petersen configuration $P(G)$ with 284 reducible states.} 
\label{fig_14}
\end{figure}

It should be noted that the second condition of a Petersen configuration $P(G)$ is necessary in the postulate of reducibility, which requires that both odd $(a,b)$ cycles of $P(G)$ contain the boundary edges of the same pentagon. As an example, the configuration $T(G)$ shown in Fig.~\ref{fig_7_a} has two odd $(a,b)$ cycles, each of length 5, and two even $(a,b)$ cycles. The two odd $(a,b)$ cycles contain boundary edges of two different pentagons; therefore $T(G)$ does not satisfy the second condition of a  Petersen configuration. We have checked that the $2^4 \times 5\times 5=400$ states of this configuration $T(G)$ are all irreducible.

\section{Three-edge Coloring Theorem}
\label{sec6}
In this section, we provide an algorithmic approach to prove that every bridgeless cubic planar graph $G(V,E)$ is 3-edge colorable. The basic idea is to recursively color the cubic graph by induction on the number of vertices $|V|$. Initially, it is trivial to show that the smallest simple cubic graph $G$ with $|V|=4$ and $|E|=6$ is 3-edge colorable. For a bridgeless cubic planar graph $G$ with $|V|=n$, we prove that $G$ is 3-edge colorable if the cubic graph $G'$ obtained by deleting an edge in $G$ is 3-edge colorable. The induction steps involve two operations, \textbf{\textit{edge deletion}} and \textbf{\textit{edge insertion}}, defined as follows.

\begin{itemize}
	\item An edge-deletion operation is performed on an uncolored cubic graph $G$. As shown in Fig.~\ref{fig_15_a}, a new cubic graph $G'$ with $|V|-2$ vertices and $|E|-3$ edges can be obtained from $G(V,E)$ by deleting an edge $e$ and smoothing out the two end nodes with degree 2.
	\item An edge-insertion operation shown in Fig.~\ref{fig_15_b} is performed on a colored configuration $T(G)$ of the cubic graph $G$. The insertion operation will introduce two new variables at the two ends of the inserted edge. 
\end{itemize}

\begin{figure}[htbp]
 \centering
 \subfigure[Illustration of edge deletion.]{
  \includegraphics[scale=0.8]{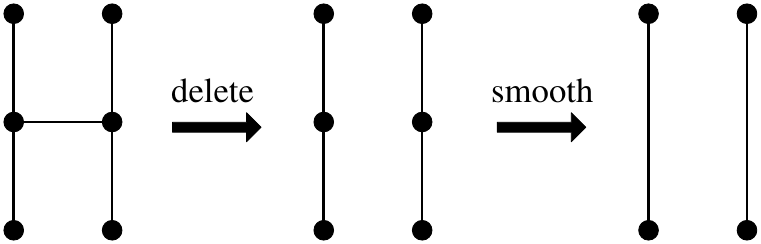}
   \label{fig_15_a}
   } \\	
 \subfigure[Illustration of edge insertion.]{
  \includegraphics[scale=0.8]{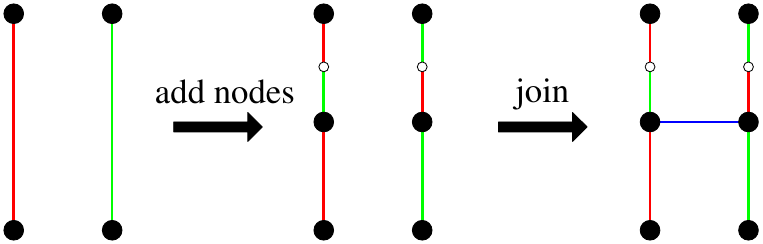}
   \label{fig_15_b}
   }
 \caption{Illustrations of edge deletion and edge insertion.} 
\label{fig_15}
\end{figure}

In graph theory, the \textbf{\textit{girth}} of a graph is the length of the shortest cycle contained in the graph. The girth of a planar graph $G$ is the minimum number of edges surrounding a face in $G$. We need the following two lemmas in the inductive steps of our algorithm.

\begin{lemma}
\label{lem1}
The girth of a bridgeless cubic planar graph $G(V,E)$ is less than or equal to 5.
\end{lemma}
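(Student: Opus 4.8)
The plan is to derive the bound from Euler's formula by a standard double-counting argument. First I would fix a planar embedding of $G$ and reduce to the connected case, since the girth of $G$ is the minimum of the girths of its connected components and each component of a bridgeless cubic graph is again bridgeless and cubic; I would also note that the girth is well defined because a graph of minimum degree $3$ necessarily contains a cycle. Writing $V$, $E$, $F$ for the numbers of vertices, edges, and faces of the embedding, Euler's formula gives $V - E + F = 2$, while the handshake identity $\sum_v \deg(v) = 2E$ together with cubicity gives $3V = 2E$, i.e.\ $V = \tfrac{2}{3}E$.

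The key step is a lower bound on the number of faces. Because $G$ is bridgeless, every face is bounded by a genuine cycle, whose length is therefore at least the girth $g$. Summing the boundary lengths over all faces and using that every edge borders exactly two faces yields $2E = \sum_f \ell(f) \ge g F$, so $F \le \tfrac{2}{g}E$. Substituting $V = \tfrac{2}{3}E$ and this bound into Euler's formula gives
\begin{equation}
2 = V - E + F \le \frac{2}{3}E - E + \frac{2}{g}E = E\left(\frac{2}{g} - \frac{1}{3}\right).
\end{equation}
If $g \ge 6$ then $\tfrac{2}{g} - \tfrac{1}{3} \le 0$, forcing $2 \le 0$, which is absurd; hence $g \le 5$.

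I expect the only delicate point, and the one place where the bridgeless hypothesis is genuinely used, to be the claim that each face boundary is a cycle of length at least $g$ rather than merely a closed walk: in the presence of a bridge, the offending edge would be traversed twice by the boundary of a single face, and the inequality $\ell(f) \ge g$ could fail. Everything else is routine arithmetic built on Euler's formula and the handshake lemma.
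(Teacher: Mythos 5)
Your proof is correct and follows essentially the same route as the paper's: Euler's formula $V-E+F=2$, the cubic identity $3V=2E$, and the double-counting bound $gF \le 2E$ leading to a contradiction when $g \ge 6$. The only difference is that you make explicit the points the paper leaves tacit (reduction to the connected case and the role of bridgelessness in guaranteeing $\ell(f)\ge g$ for every face), which strengthens rather than changes the argument.
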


\begin{proof}
Suppose the cubic graph $G(V,E)$ embedded on a sphere has $f$ faces, $v$ vertices, and $e$ edges. From Euler's formula, we have 
\begin{equation}\label{equ7}
v-e+f=2.
\end{equation}
Suppose the girth of $G$ is larger than 5, then we have 
\begin{equation}\label{equ8}
6f \leq 2e.
\end{equation}
In a cubic graph $G(V,E)$, we know that $3v=2e$. From \eqref{equ7}, the above inequality \eqref{equ8} implies $6f \leq 6f-12$, which is impossible.   
\end{proof}

An edge $e$ in a bridgeless cubic planar graph $G(V,E)$ is an \textbf{\textit{admissible edge}}, if the graph $G'$ obtained by deleting edge $e$ remains bridgeless. The following lemma obviously holds in any bridgeless cubic planar graph.
\begin{lemma}
\label{lem2}
Any face of a bridgeless cubic planar graph $G(V,E)$ has at least one admissible edge.
\end{lemma}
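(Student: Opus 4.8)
The plan is to first recast admissibility as a structural condition and then bring planarity to bear. Deleting $e=(u,v)$ and suppressing the two resulting degree-$2$ vertices is a homeomorphism, so the bridges of the reduced graph $G'$ are exactly those edges $g$ of $G$ for which $\{e,g\}$ is a minimal edge cut; because $G$ is bridgeless, $e$ is \emph{inadmissible} precisely when $e$ lies in a $2$-edge-cut. Passing to the planar dual, a $2$-edge-cut corresponds to a length-two cycle in the dual, that is, to two edges that separate the \emph{same} pair of faces. Hence, if an edge $e$ on the boundary cycle $C$ of a face $F$ is inadmissible, its cut-partner $g$ borders the same two faces as $e$ and therefore also lies on $C$.

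Suppose for contradiction that every edge of $C$ is inadmissible. By the above, the edges of $C$ pair off into $2$-edge-cuts with both members on $C$. I would pick the pair $\{e,g\}$ whose shorter arc on $C$ is minimal. If $e$ and $g$ share a vertex $w$, then the third edge $h$ at $w$ attaches the rest of its side to the remainder of $G$ only through $w$, so $h$ is a bridge---contradicting bridgelessness. Hence the minimal pair must be non-adjacent, and its shorter arc contains an interior edge $e'$, which is again inadmissible and so has a partner $g'$ on $C$.

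The decisive step is to show that $g'$ lies strictly inside that shorter arc, producing a still-shorter pair and contradicting minimality. This is where planarity is indispensable: viewing each $2$-edge-cut as a closed curve that passes through the dual vertex $F^*$ and crosses only its two edges, the curves of distinct cuts meet only at $F^*$ and cannot cross transversally without two edges of $G$ intersecting; since $e'$ lies in the region bounded by the curve of $\{e,g\}$, its partner $g'$ is trapped there as well. I expect this non-crossing (laminarity) property to be the main obstacle, along with the degenerate case in which $\{e,g\}$ and $\{e',g'\}$ separate $F$ from the \emph{same} opposite face; the latter I would handle separately, as it forces $F$ to share four edges with one neighbour and lets me re-select a properly nested pair. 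Once the non-crossing claim is secured, both cases are impossible, so some edge of $C$ must be admissible.
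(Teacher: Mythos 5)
Your proposal is correct, and the comparison here is one-sided: the paper gives no proof of this lemma at all, merely asserting that it ``obviously holds,'' so you are supplying an argument the paper lacks. Your route---(i) $e$ is inadmissible exactly when it lies in a $2$-edge-cut, (ii) by planar duality a $2$-edge-cut in a bridgeless plane graph is a pair of edges bordering the same two faces, so the partner of an inadmissible edge of $F$ lies on the boundary cycle $C$ of $F$, (iii) an extremal (minimal-gap) choice of cut pair, with the adjacent case excluded because the third edge at the shared vertex would be a bridge, and (iv) non-crossing of cut pairs around $C$---is sound, and the two delicate points you flag are real and resolve exactly as you suggest. Two refinements are needed to close step (iv). First, the correct justification of non-crossing is a parity argument, not ``edges of $G$ intersecting'': if $\{e,g\}$ and $\{e',g'\}$ interleaved around $C$ and the two cuts had distinct second faces $F_1 \neq F_2$, the corresponding dual $2$-cycles would be two closed curves meeting transversally in exactly one point (their chords inside the disk $F$ must cross, and they can meet nowhere else), contradicting the fact that two closed curves in the plane cross an even number of times. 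Second, your degenerate case $F_1 = F_2$ is genuinely necessary, since crossing $2$-cuts on a common face do occur in planar cubic bridgeless graphs (four triangles joined in a ring); but there all four edges border the same two faces, so any two of them---in particular the nested pair $\{e,e'\}$---already form a $2$-edge-cut of strictly smaller gap, restoring the contradiction, exactly the re-selection you describe. Finally, note that if the partner $g'$ coincides with $e$ or $g$ the gap still strictly decreases, so that subcase folds into the same contradiction; with these details filled in, your proof is complete.
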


In the rest of this section, we show that the following 3-edge coloring theorem is an immediate consequence of the postulate of reducibility of the Petersen configuration. 

\begin{theorem}
Every bridgeless cubic planar graph $G(V,E)$ has a 3-edge coloring.
\end{theorem}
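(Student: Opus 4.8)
The plan is to prove the theorem by induction on the number of vertices $|V|$, descending with edge deletion and climbing back up with edge insertion together with the Reducibility Postulate. The base case is the tetrahedron ($|V|=4$), which is $3$-edge colorable. For the inductive step, let $G$ be a bridgeless cubic planar graph with $|V|=n>4$. By Lemma~\ref{lem1}, $G$ has a face $F$ whose boundary consists of at most $5$ edges, and by Lemma~\ref{lem2} this face contains an admissible edge $e$. Deleting $e$ and smoothing out the two resulting degree-$2$ vertices yields a graph $G'$ that is again cubic and planar, and \emph{bridgeless} precisely because $e$ was admissible; moreover $|V(G')|=n-2$. By the induction hypothesis $G'$ has a proper $3$-edge coloring, which in the complex-coloring language is a configuration $T(G')$ containing no variables.

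First I would reinsert $e$ into the colored graph $G'$. By the edge-insertion operation of Fig.~\ref{fig_15_b}, this produces a configuration $T(G)$ of $G$ with exactly two new variables, one at each endpoint of $e$; after an $(a,b)$ color exchange and, if necessary, a relabeling of colors, I may assume these are two $(a,b)$ variables sitting on the boundary of the face $F$. If the two variables cancel immediately along a common Kempe path, then $T(G)$ is already reducible and we are done, so assume they lie in two disjoint odd $(a,b)$ Tait cycles. The argument now splits according to the size of $F$.

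For a face $F$ of size $3$ or $4$, I would argue directly, in the spirit of Kempe's classical chain reduction: the shortness of the bounding cycle forces the two $(a,b)$ variables onto a common two-colored alternating path (or into a common even cycle after negating a short resolution cycle), so that a single Kempe walk eliminates them and produces a proper $3$-edge coloring of $G$. For a face $F$ that is a pentagon, the two $(a,b)$ variables lie on the boundary of a pentagon, three of whose edges belong to the two odd $(a,b)$ Tait cycles while the remaining two are $(c,c)$ edges of the perfect matching; this is exactly the defining data of a Petersen configuration $P(G)$. By the Reducibility Postulate, every Petersen configuration of a bridgeless cubic planar graph is reducible, so after negating an essential cycle the two variables are eliminated by a Kempe walk, yielding a proper $3$-edge coloring of $G$. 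This completes the induction.

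The main obstacle I anticipate is the pentagon case, specifically verifying that the configuration produced by edge insertion genuinely satisfies \emph{both} defining conditions of a Petersen configuration, and in particular the second condition that both odd $(a,b)$ cycles meet the same pentagon, so that the postulate is applicable rather than merely the weaker statement that the two variables lie on some pentagon's boundary. A secondary difficulty is the bookkeeping for the smaller faces: one must confirm that the direct elimination always succeeds for triangles and quadrilaterals (this is exactly the range in which Kempe's original argument was sound), and that smoothing does not destroy the bridgeless cubic structure needed to invoke the induction hypothesis. Once these points are secured, the theorem follows immediately from the postulate.
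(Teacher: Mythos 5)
Your proposal is correct and follows essentially the same route as the paper's own proof: induction on $|V|$ via edge deletion and reinsertion, using Lemma~\ref{lem1} and Lemma~\ref{lem2} to select an admissible edge on a face with at most five boundary edges, direct Kempe-walk and cycle-negation arguments for faces of size at most four, and the Reducibility Postulate for the pentagon case. The two obstacles you flag are precisely the points the paper settles by explicit construction: the small-face cases are handled by the sub-case analysis of Figs.~\ref{fig_16} and~\ref{fig_17}, and in the pentagon case the two variables sit at the endpoints of the reinserted pentagon edge by construction (Fig.~\ref{fig_18}), so both defining conditions of a Petersen configuration hold whenever the variables fall in disjoint odd $(a,b)$ cycles.
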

\begin{proof}
As we mentioned above, the 3-edge coloring of a bridgeless cubic planar graph $G(V,E)$ with $|V|=n$ can be derived from the coloring of the cubic graph $G'$, which is obtained by deleting an edge in $G$. Thus, our induction step starts with the selection of a face $F$ in graph $G$ with the minimum number of boundary edges, which is less than 6 according to Lemma~\ref{lem1}. The face $F$ has an admissible edge $e$ according to Lemma~\ref{lem2}, such that the cubic graph $G'$, derived from $G$ by deleting edge $e$, is still bridgeless. 

Suppose that any bridgeless cubic planar graphs with $|V|=n-2$ vertices are 3-edge colorable. We prove that graph $G$ is 3-edge colorable by considering all possible girths of graph $G$. First, it is trivial to show that the induction is valid if the girth is equal to 2. The remaining three nontrivial cases are described as follows. 

\begin{enumerate}
	\item The girth of $G$ equals 3.
	
  Suppose the cubic graph $G(V,E)$ with $n$ vertices has a face $F(v_{2},v_3,v_4)$ with three boundary edges as shown in Fig.~\ref{fig_16_a}. We assume, without loss of generality, that edge $e_{3,4}=(v_3,v_4)$ is admissible. Then a bridgeless cubic graph $G'$ with $n-2$ vertices can be obtained by deleting the edge $e_{3,4}$ and smoothing out the two end vertices $v_3$ and $v_4$. By induction hypothesis, the cubic graph $G'$ has a 3-edge coloring. Suppose the two edges $e_{2,5}=(v_2,v_5)$ and $e_{2,6}=(v_2,v_6)$ in $G'$ are colored by $a$ and $b$, respectively, as shown in Fig.~\ref{fig_16_b}. Then we can derive a 3-edge coloring of graph $G$ as shown in Fig.~\ref{fig_16_d}.
	
\begin{figure}[htbp]
 \centering
 \subfigure[A triangle in $G$.]{
  \includegraphics[scale=1]{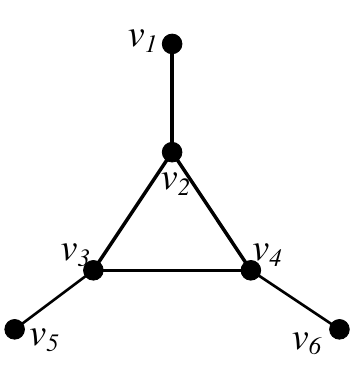}
   \label{fig_16_a}
   } \qquad
 \subfigure[A 3-edge coloring of $G'$.]{
  \includegraphics[scale=1]{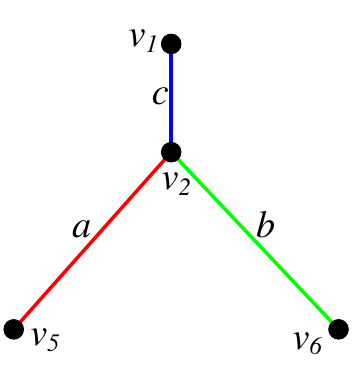}
   \label{fig_16_b}
   } \\
	\subfigure[A 3-edge coloring of $G$.]{
  \includegraphics[scale=1]{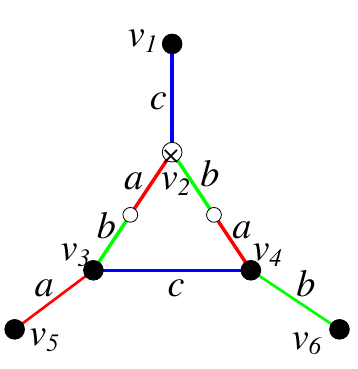}
   \label{fig_16_c}
   } \qquad
 \subfigure[A proper 3-edge coloring of $G$.]{
  \includegraphics[scale=1]{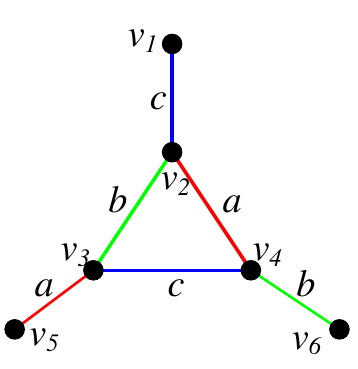}
   \label{fig_16_d}
   }
 \caption{Recursive 3-edge coloring of $G$ when girth is 3.} 
\label{fig_16}
\end{figure}
	
\begin{figure}[htbp]
 \centering

\begin{tabular}{ccc}
		  \subfigure[A quadrangle in $G$.]{
				\includegraphics[scale=0.7]{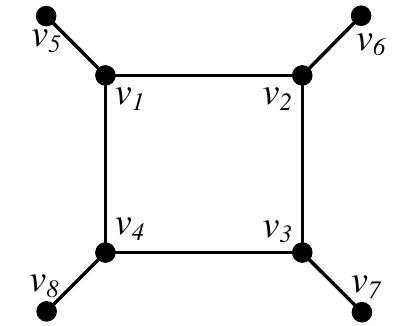}
				 \label{fig_17_a}
				 }\\
			\subfigure[A 3-edge coloring of $G'$.]{
				\includegraphics[scale=0.7]{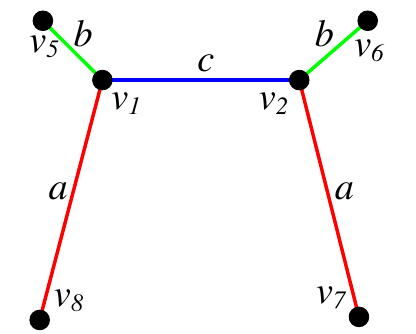}
				\label{fig_17_b}
				}&
				\subfigure[A 3-edge coloring of $G$.]{
				\includegraphics[scale=0.7]{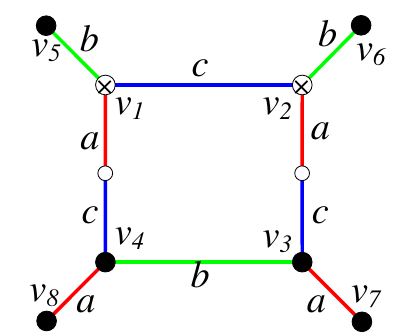}
				\label{fig_17_c}
				}&
			\subfigure[A proper 3-edge coloring of $G$.]{
				\includegraphics[scale=0.7]{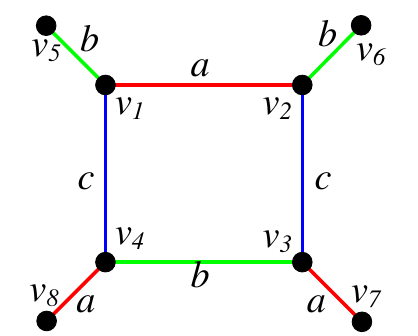}
				\label{fig_17_d}
				} \\
				\subfigure[A 3-edge coloring of $G'$.]{
				\includegraphics[scale=0.7]{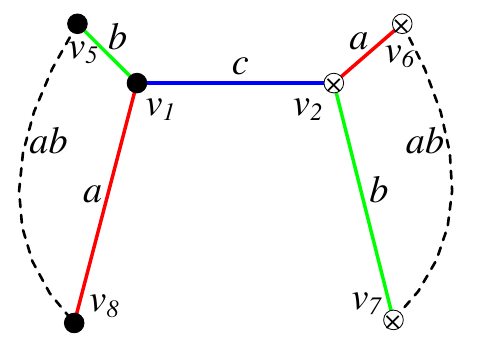}
				\label{fig_17_e}
				}&
			\subfigure[Coloring of $G'$ after negating one ($a,b$) cycle.]{
				\includegraphics[scale=0.7]{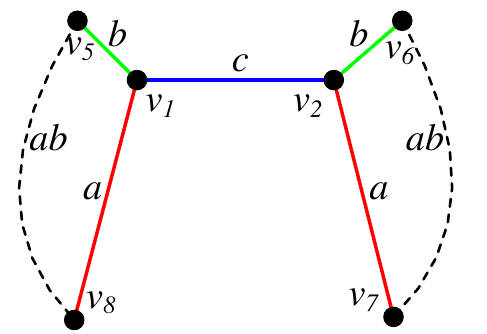}
				\label{fig_17_f}
				}&
				\subfigure[A proper 3-edge coloring of $G$.]{
				\includegraphics[scale=0.7]{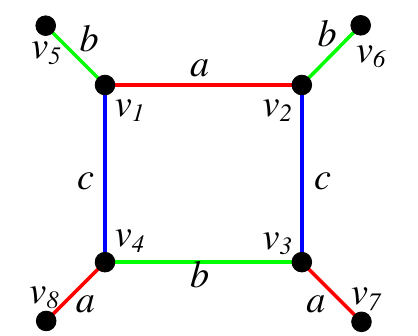}
				\label{fig_17_g}
				} \\
			\subfigure[A 3-edge coloring of $G'$.]{
				\includegraphics[scale=0.7]{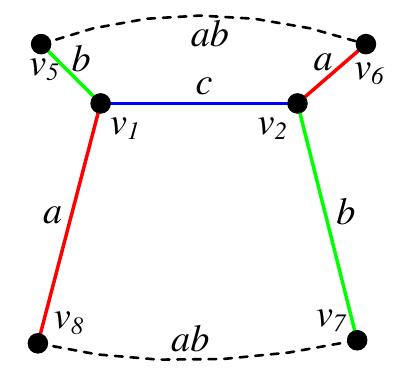}
				\label{fig_17_h}
				}&
				\subfigure[A 3-edge coloring of $G$.]{
				\includegraphics[scale=0.7]{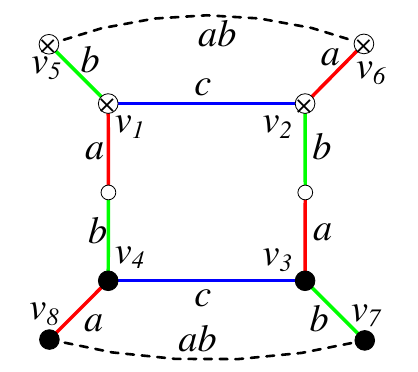}
				\label{fig_17_i}
				}&
			\subfigure[A proper 3-edge coloring of $G'$.]{
				\includegraphics[scale=0.7]{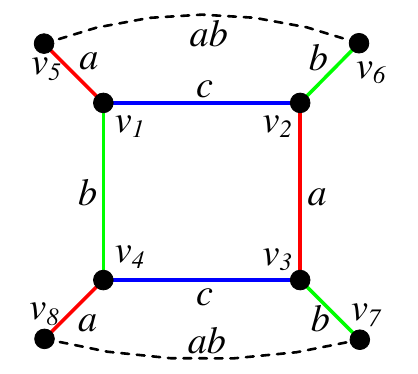}
   \label{fig_17_j}
   }
\end{tabular}
 \caption{Recursive 3-edge coloring of $G$ when girth is 4.} 
\label{fig_17}
\end{figure}

	\begin{figure}[htbp]
 \centering
 \subfigure[A pentagon in $G$.]{
  \includegraphics[scale=0.7]{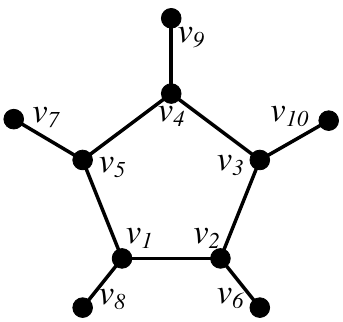}
   \label{fig_18_a}
   } \qquad
 \subfigure[A 3-edge coloring of $G'$.]{
  \includegraphics[scale=0.7]{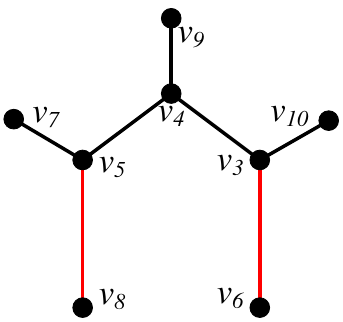}
   \label{fig_18_b}
   } \qquad
	\subfigure[A 3-edge coloring of $G$.]{
  \includegraphics[scale=0.7]{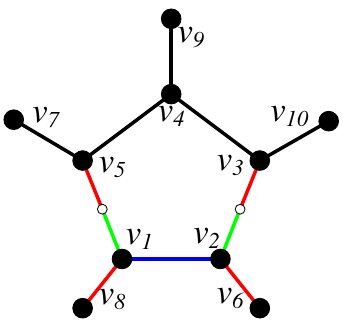}
   \label{fig_18_c}
   } \\
 \subfigure[A 3-edge coloring of $G'$.]{
  \includegraphics[scale=0.7]{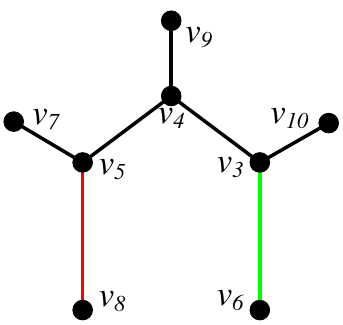}
   \label{fig_18_d}
   } \qquad
	\subfigure[A 3-edge coloring of $G$.]{
  \includegraphics[scale=0.7]{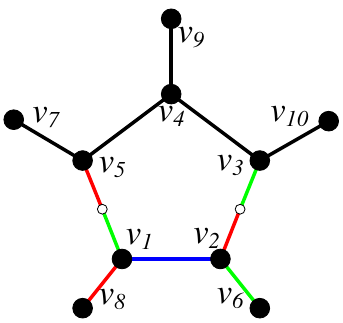}
   \label{fig_18_e}
   }
 \caption{Recursive 3-edge coloring of $G$ when girth is 5.} 
\label{fig_18}
\end{figure}
	
	\item The girth of $G$ equals 4.

	Suppose the cubic graph $G(V,E)$ with $n$ vertices has a face $F(v_1, v_2, v_3, v_4)$ with four boundary edges as shown in Fig.~\ref{fig_17_a}. We assume that edge $e_{3,4}=(v_3,v_4)$ is admissible. As before, a bridgeless cubic graph $G'$ with $n-2$ vertices can be obtained by deleting the edge $e_{3,4}$ and the two end vertices $v_3$ and $v_4$. By induction hypothesis, the cubic graph $G'$ has a 3-edge coloring. Suppose the color $c$ is assigned to edge $e_{1,2}=(v_1,v_2)$ in $G'$. Then the coloring of edge $e_{3,4}=(v_3,v_4)$ in $G$ is determined by the coloring of edge $e_{1,8}=(v_1,v_8)$ and $e_{2,7}=(v_2,v_7)$ in $G'$. We consider the following two sub-cases:

	\begin{enumerate}
		\item \label{subcase1} If edge $e_{1,8}$ and $e_{2,7}$ are colored with same color, say color $a$, as shown in Fig.~\ref{fig_17_b}. Then we can assign color $b$ to the inserted edge $e_{3,4}=(v_3,v_4)$, as shown in Fig.~\ref{fig_17_c}. The two $(a,c)$ variables created by inserting edge $e_{3,4}$ can be eliminated by a Kempe walk along the path $(v_4-v_1-v_2-v_3)$, by performing color exchanges at $v_1$ and $v_2$. We obtain a proper 3-edge coloring of $G$, as shown in Fig.~\ref{fig_17_d}, after eliminating these two $(a,c)$ variables.

 \item 	If edge $e_{1,8}$ and $e_{2,7}$ are colored with different colors, then we consider the following two circumstances:
		\begin{enumerate}
		 \item As shown in Fig.~\ref{fig_17_e}, the two $(a,b)$ paths $(v_5-v_1-v_8)$ and $(v_6-v_2-v_7)$ are contained in different $(a,b)$ cycles. Negate one of the $(a,b)$ cycle in $G'$, say the one that contains the edge $e_{2,7}$. As a result, the two edges $e_{1,8}$ and $e_{2,7}$ have the same color $a$ as shown in Fig.~\ref{fig_17_f}. Then we can easily derive a 3-edge coloring of $G$, as shown in Fig.~\ref{fig_17_g}, in a way similar to sub-case \ref{subcase1} described above.

		\item As shown in Fig.~\ref{fig_17_h}, the two $(a,b)$ paths $(v_5-v_1-v_8)$ and $(v_6-v_2-v_7)$ are contained in the same $(a,b)$ cycle. We first insert edge $e_{3,4}=(v_3,v_4)$ into $G'$. The inserted $(c, c)$ edge $e_{3,4}$ introduces two $(a,b)$ variables at the two ends to make the coloring of $G$ consistent, as shown in Fig.~\ref{fig_17_i}. Since these two variables are contained in the same $(a,b)$ cycle, they can be easily eliminated by a Kempe walk, as shown in Fig.~\ref{fig_17_i}. The resulting 3-edge coloring of $G$ is exhibited in Fig.~\ref{fig_17_j}.
		\end{enumerate}

\end{enumerate}

	\item The girth of G equals 5.

	Suppose the cubic graph $G(V,E)$ with $n$ vertices has a face $F(v_1, v_2, v_3, v_4, v_5)$ with five boundary edges as shown in Fig.~\ref{fig_18_a}. We assume that edge $e_{1,2}=(v_1,v_2)$ is admissible. We can obtain a graph $G'$ with $n-2$ nodes by deleting edge $e_{1,2}$ and smoothing out the two end vertices $v_1$ and $v_2$, as shown in Fig.~\ref{fig_18_b}. 
	
By induction hypothesis, the cubic graph $G'$ has an 3-edge coloring. The two edges $e_{5,8}$ and $e_{3,6}$ in $G'$ may have the same color, say color $a$, as shown in Fig.~\ref{fig_18_b}, or different colors, say color $a$ and $b$, as shown in Fig.~\ref{fig_18_d}. In both cases, two $(a,b)$ variables will be created if we insert edge $e_{1,2}$ back into $G'$, as shown in Fig.~\ref{fig_18_c} and \ref{fig_18_e}, respectively. If the two variables are contained in the same $(a,b)$ cycle, then they can be eliminated by a simple Kempe walk. Otherwise, graph $G$ has a Petersen configuration $P(G)$ if these two variables are contained in different $(a,b)$ cycles. Graph $G$ is 3-edge colorable because the Petersen configuration $P(G)$ is reducible according to the reducibility postulate.

\end{enumerate}

\end{proof}

{\parindent0pt \textbf{Remark:}}
The 4CT is therefore established according to Tait's equivalent formulation mentioned in Section \ref{sec1}. It should be noted that because the reducibility postulate requires that the Petersen configuration can be reduced by a single essential cycle, the 4CT does not imply that every Petersen configuration has this property. Therefore, the existing computer-assisted proof of 4CT does not apply to the proof of reducibility postulate.

\section{Comparisons and Discussions}
\label{sec7}
The reducibility postulate of the Petersen configuration in graph theory and the parallel postulate in Euclidean geometry share some common characteristics of the two-dimensional plane. A comparison between these two propositions is described as follows.

\begin{itemize}
\item \textbf{\textit{Invariants of the plane.}}

In the two-dimensional plane, the well-known geometric invariant that the sum of the angles in every triangle equals $\pi$ is an immediate consequence of the parallel postulate. Similarly, in the last section, we proved that the reducibility postulate of the Petersen configuration implies that the chromatic index of bridgeless cubic planar graphs equals 3, or equivalently, the minimum number of colors to color a geographical map is 4, which is a topological invariant of the plane. 

\item \textbf{\textit{Solvability conditions of two equations in the plane.}}

In Euclid’s Elements, postulate 5, the original version of the parallel postulate, is stated as follows: 
\begin{quote}
\textit{If a straight line falling on two straight lines makes the interior angles on the same side less than two right angles, the two straight lines, if produced indefinitely, meet on that side on which are the angles less than the two right angles.}
\end{quote}

Consider each line as a linear equation. The parallel postulate is actually the solvability condition of two linear equations in the plane, which is somewhat analogous to that of the reducibility postulate. In a bridgeless cubic planar graph $G$, the reducibility postulate claims that every Petersen configuration $P(G)$ of $G$ is reducible, or equivalently, the two $(a,b)$ variables of $P(G)$ can be connected by an $(a,b)$ Kempe path in some state of $P(G)$, via negating a resolution cycle. The common characteristics of the parallel postulate and the reducibility postulate are listed in Table~\ref{tab2}.

\begin{table}[ht]
\centering
\caption{A comparison between parallel postulate and postulate of reducibility} 
\begin{tabular}{| c | c |}
\hline
\textbf{Parallel Postulate} 
 & \textbf{Reducibility Postulate} 
 \\ 
\hline
In the plane & In the plane  \\
\hline
Two lines intersected by a third line & Two locking cycles joined by an essential resolution cycle  \\
\hline
Sum of inner angles $<\pi$ & Bounded by pentagon \\
\hline
\end{tabular}
\label{tab2}
\end{table}

\end{itemize}

The analogy between these two solvability propositions in the plane is illustrated in Fig.~\ref{fig_19}. The parallel postulate asserts the solvability condition of two lines by using a third line, as highlighted in Fig.~\ref{fig_19_a}, that intersects both lines, while the reducibility postulate asserts the solvability of two odd $(a,b)$ cycles by using an essential $(a,c)$ resolution cycle, as highlighted in Fig.~\ref{fig_19_b}, that "intersects" both locking cycles. The two $(a,b)$ locking cycles and an essential $(a,c)$ resolution cycle that compose the Petersen configuration are shown in Fig.~\ref{fig_19_c} and Fig.~\ref{fig_19_d}, respectively.

Furthermore, the counterpart of the condition $\theta_1+\theta_2<\pi$ is that both $(a,b)$ locking cycles must contain the edges of a face with less than six border edges. Note that the boundary of the entire configuration shown in Fig.~\ref{fig_19_b} is a pentagon. These two similar bounding conditions represent some extruding constraints that ensure the existence of a solution in the plane.

\begin{figure}[hbtp]
 \centering
 \subfigure[Parallel postulate.]{
  \includegraphics[scale=0.7]{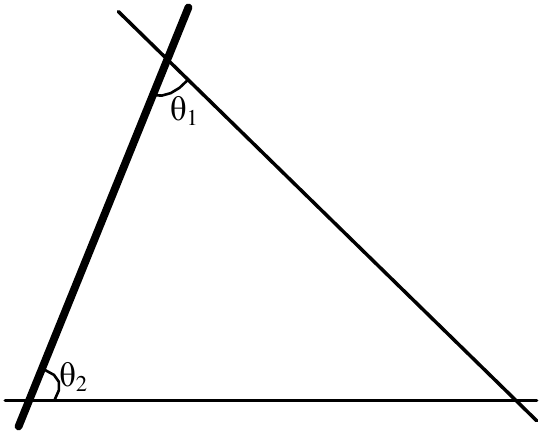}
   \label{fig_19_a}
   } \qquad
 \subfigure[Reducibility postulate.]{
  \includegraphics[scale=0.7]{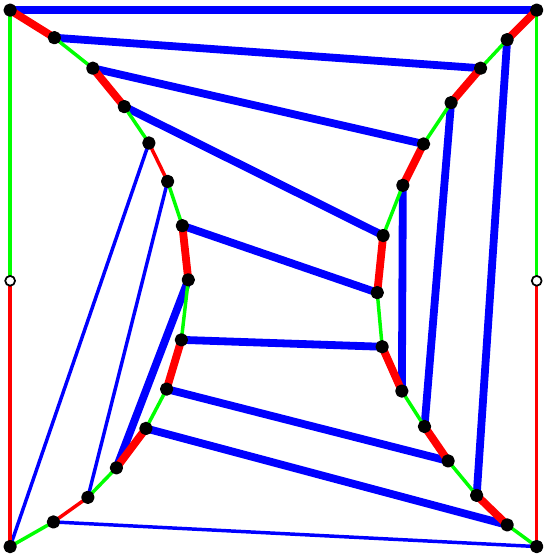}
   \label{fig_19_b}
   } \\
	\subfigure[Two $(a,b)$ locking cycles.]{
  \includegraphics[scale=0.7]{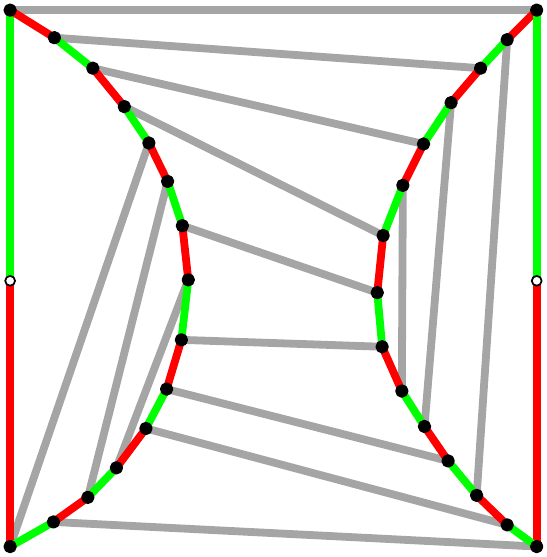}
   \label{fig_19_c}
   } \qquad
	\subfigure[$(a,c)$ essential resolution cycle.]{
  \includegraphics[scale=0.7]{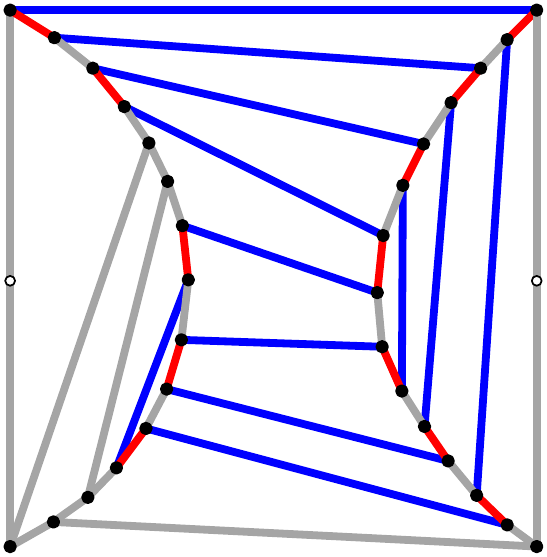}
   \label{fig_19_d}
   } 
 \caption{Analogy between parallel postulate and reducibility postulate.}  
\label{fig_19}
\end{figure}

On the other hand, the Petersen configuration may be irreducible if the cubic graph is non-planar. As we illustrated in Fig.~\ref{fig_9} before, all configurations of the Petersen graph are irreducible and they are isomorphic to each other. The above analogy between the parallel postulate and the reducibility postulate not only reveals the significance of the concept of complex coloring in graph theory, but also hints that a logical proof of the proposition is most likely impossible, the same as parallel postulate.

\section{Conclusions}
\label{sec8}
As we mentioned before, the configuration of a bridgeless cubic graph $G$ is uniquely determined by Petersen's perfect matching. Lov\'{a}sz and Plummer conjectured that the number of perfect matchings contained in $G$ is exponential to the number of vertices of $G$ in \cite{Lovasz1986}. The conjecture was settled by Esperet et al. in \cite{Esperet20111646}. The identification of snark $G$ with $n$ vertices is determined by the number of non-isomorphic configurations, denoted as $\theta(G)$, in the closed set of irreducible configurations of $G$, and the number of variables in each configuration, called \textbf{\textit{oddness}} and denoted as $\kappa(G)$ \cite{Huck1995119}. It has been shown in \cite{Steffen2004191} that the oddness $\kappa(G)$ is unbounded. Presumably, the complexity of deciding whether a cubic graph is a snark should be at least on the order of $O(\theta n^\kappa)$, even excluding the testing of isomorphism. Therefore, it is not feasible to deduce an efficient deterministic algorithm to identify snarks. On the other hand, as snarks are clearly specified, the development of some efficient randomized algorithms is possible in the future.

In respect to computation complexity, the postulate of reducibility is consistent with the computer-assisted proof of the 4CT. Despite the fact that it is NP-complete to determine the chromatic index of an arbitrary cubic graph, this problem for planar graphs can be solved in polynomial time because the computer-assisted proof of the 4CT actually gives a quadratic algorithm for map coloring, as described in \cite{Robertson1996}. The proof of the 3-edge coloring theorem presented in Section \ref{sec6} also implies a polynomial time algorithm. The consistency of the complexities of these two drastically different approaches suggests that the reduciblity postulate of the Petersen configuration should hold in every bridgeless cubic planar graph.

The complexities for edge coloring of bridgeless cubic graphs are classified in Fig.~\ref{fig_20}, based on the assumption that the postulate of reducibility is valid. Suppose, by contrast, that some Petersen configurations of bridgeless cubic planar graphs are irreducible, such that the postulate of reducibility is invalid. Then the complexity of edge coloring of planar cubic graphs would also be in the class of NP-complete, the same as non-planar cases, because it would involve searching for a reducible state in the entire space of all configurations. This scenario conflicts with the quadratic algorithm for map coloring derived from the computer-assisted proof of the 4CT. 

\begin{figure}[htbp]
 \centering
  \includegraphics[scale=0.9]{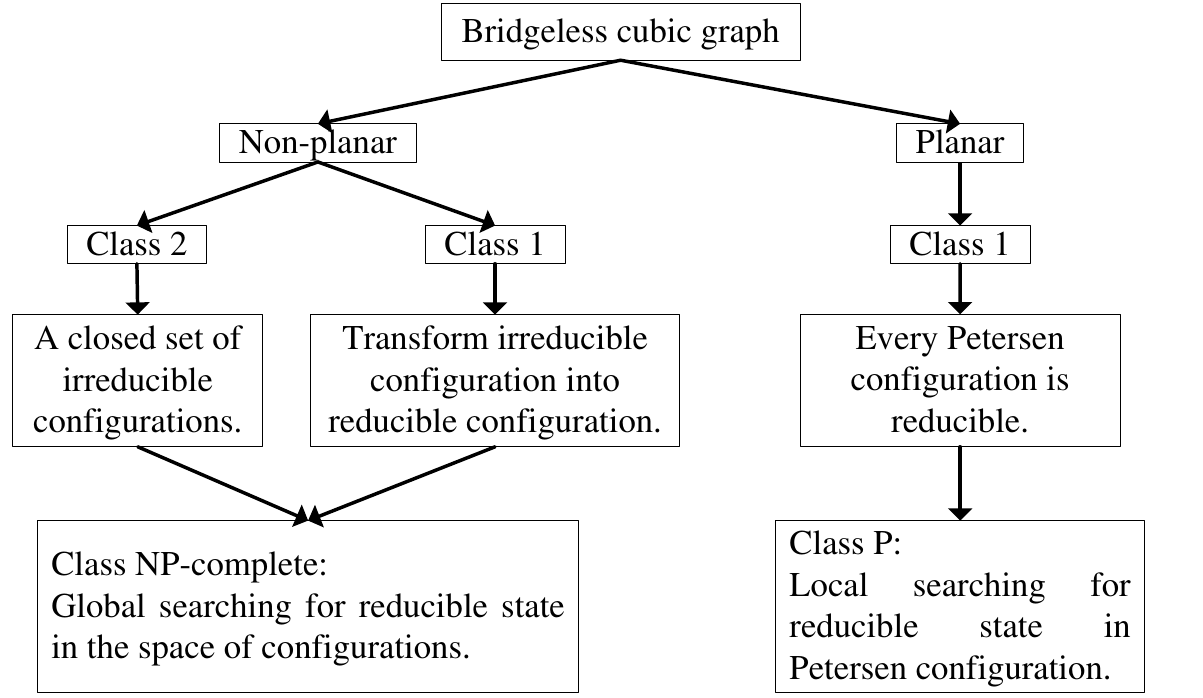}
 \caption{The classification of bridgeless cubic graphs.} 
\label{fig_20}
\end{figure}

Another NP-complete problem that could be tackled by complex color exchanges is finding Hamiltonian cycles \cite{papadimitriou1995computational}. A simple graph $G$ may have more than one proper color configurations, and a configuration can be transformed into other configurations by negating maximal two-colored Tait cycles. A configuration is Hamiltonian if it contains a two-colored Hamiltonian cycle. By random walks on the entire space of configurations, a solution of a given graph $G$ could be reached if $G$ is Hamiltonian. This is only an example to show that the application of complex coloring to solve some hard combinatorial problems could be a challenging research topic in the future.

\section*{Acknowledgment}
Tony T. Lee was supported by the National Science Foundation of China under Grant 61172065. Qingqi Shi was supported by the Hong Kong RGC Earmarked Grant CUHK414012.

\bibliography{bib4ct}
\bibliographystyle{aomalpha}

\appendix
\section{Contraction of Snarks to the Petersen graph}
\label{appdx_A}

The contraction of a snark with two $(a,b)$ variables to the Petersen graph can be achieved by the following procedure:

\begin{enumerate}
	\item   Delete all $(a,a)$ edges that are not contained in the two odd $(a, b)$ cycles, and smooth out all degree two vertices on the $(b,c)$ chains, which become $(c,c)$ edges.
	\item   Delete all internal chords of the two odd $(a, b)$ cycles, and smooth out all degree two vertices on these two cycles.
	\item   Find a subset of five externals chords, the $(c,c)$ edges that connect the two $(a, b)$ cycles, that forms the same connection pattern as the five external chords in the configuration $T(G_P)$ of the Petersen graph, as shown in Fig.~\ref{fig_9_a}.
	
\end{enumerate}

In the configuration described above, the term external chord refers to the $(c,c)$ edge that connects the two disjoint $(a,b)$ cycles, and the term internal chord refers to the $(c,c)$ edge with both ends terminated on the same $(a,b)$ cycle. The contraction of several snarks is shown in Fig.~\ref{fig_21} through Fig.~\ref{fig_24}.

\begin{figure}[htbp]
 \centering
 \subfigure[Flower snark.]{
  \includegraphics[scale=0.6]{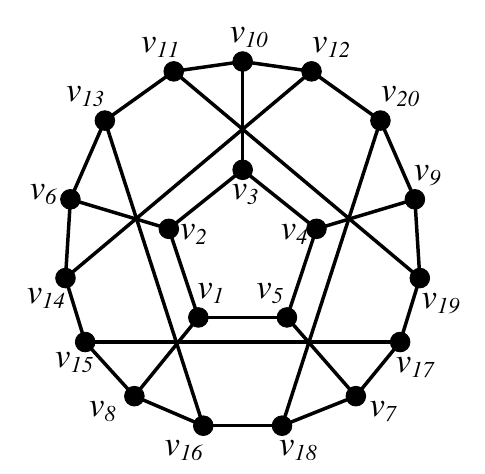}
   \label{fig_21_a}
   }  \quad
 \subfigure[A configuration of Flower snark.]{
  \includegraphics[scale=0.6]{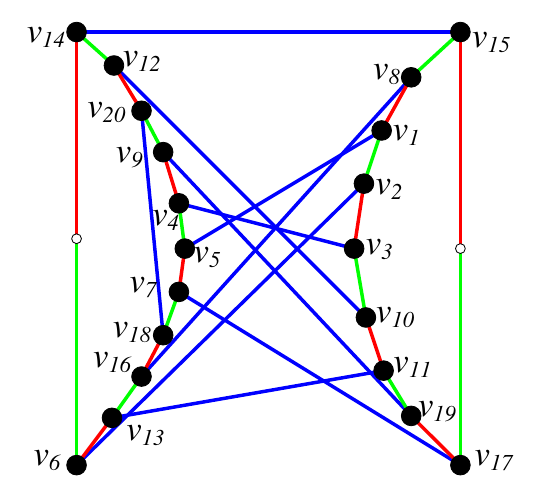}
   \label{fig_21_b}
   } \quad
	\subfigure[Find 5 chords matched with $T(G_p)$ of the Petersen graph.]{
  \includegraphics[scale=0.6]{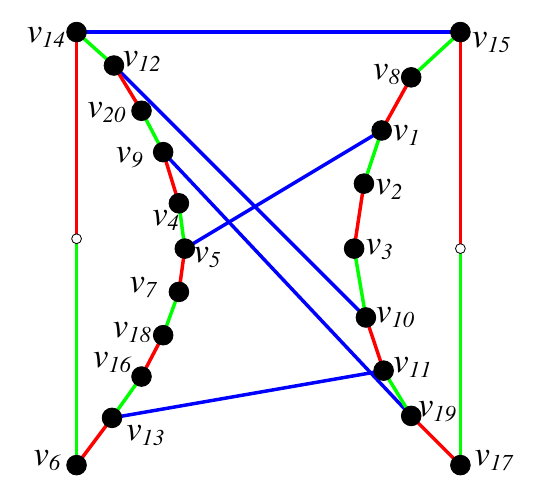}
   \label{fig_21_c}
   } \\
 \subfigure[Reduce to the Petersen graph.]{
  \includegraphics[scale=0.6]{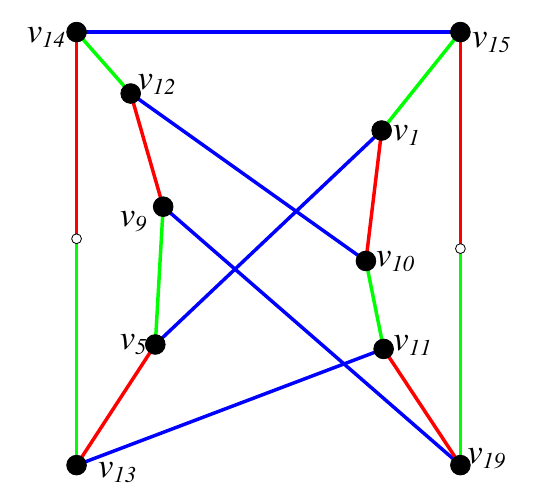}
   \label{fig_21_d}
   } \quad
	\subfigure[A subdivision of the Petersen graph embedded in Flower snark.]{
  \includegraphics[scale=0.6]{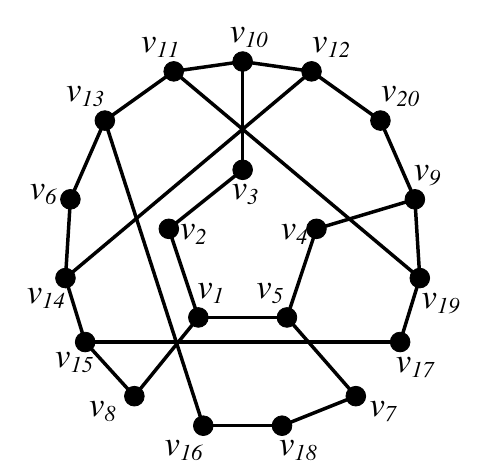}
   \label{fig_21_e}
	}
 \caption{Contraction of Flower snark to the Petersen graph.}
\label{fig_21}
\end{figure}

\begin{figure}[htbp]
 \centering
 \subfigure[Loupekine's first snark $L_1$.]{
  \includegraphics[scale=0.6]{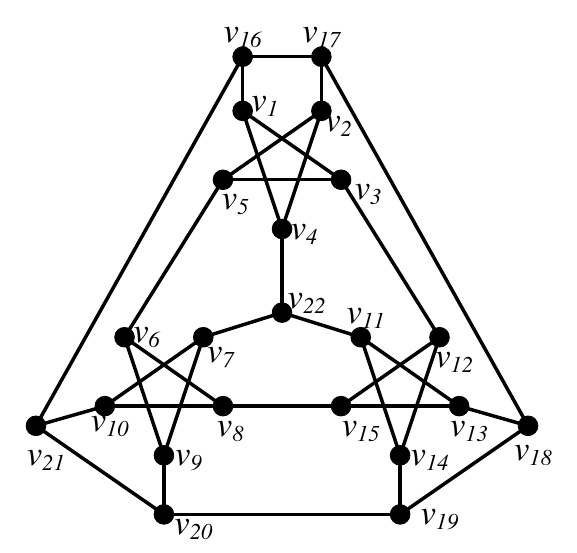}
   \label{fig_22_a}
   } \quad
 \subfigure[A configuration $T(L_1)$ of $L_1$.]{
  \includegraphics[scale=0.6]{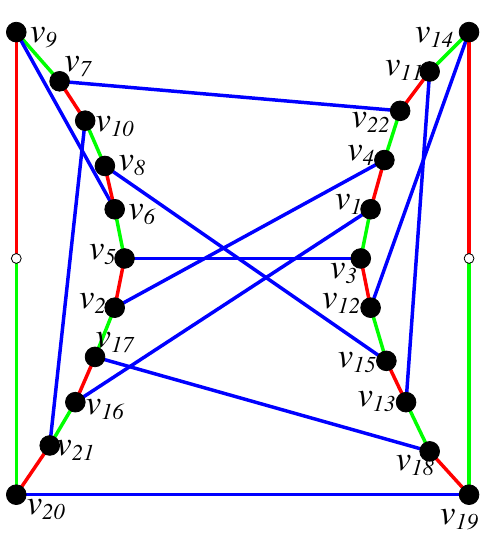}
   \label{fig_22_b}
   } \quad
	\subfigure[Remove internal chord of $L_1$.]{
  \includegraphics[scale=0.6]{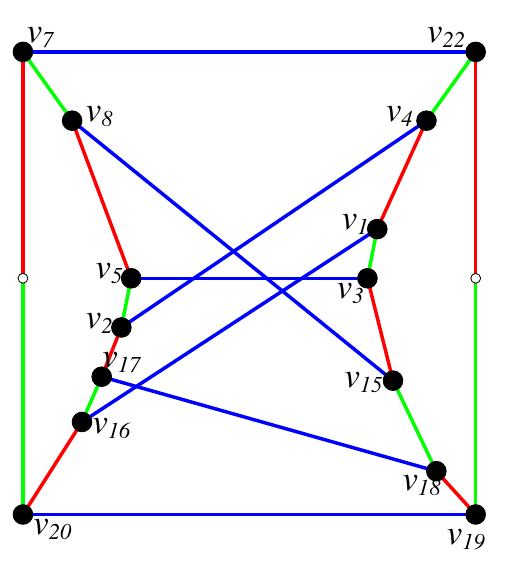}
   \label{fig_22_c}
   } \\
 \subfigure[Find 5 chords matched with $T(G_p)$ of the Petersen graph.]{
  \includegraphics[scale=0.6]{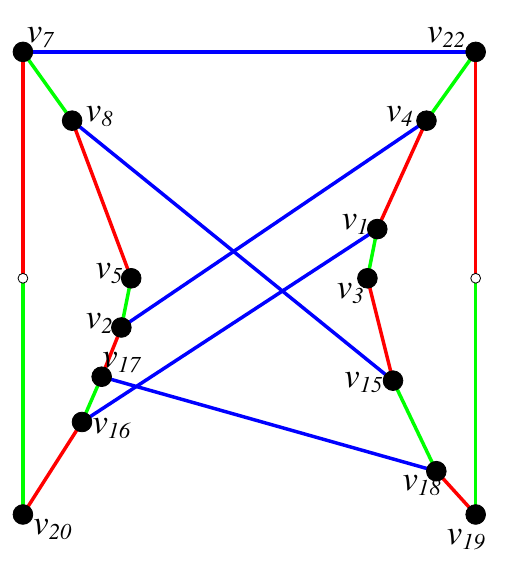}
   \label{fig_22_d}
   } \quad
	\subfigure[Reduce to the Petersen graph.]{
  \includegraphics[scale=0.6]{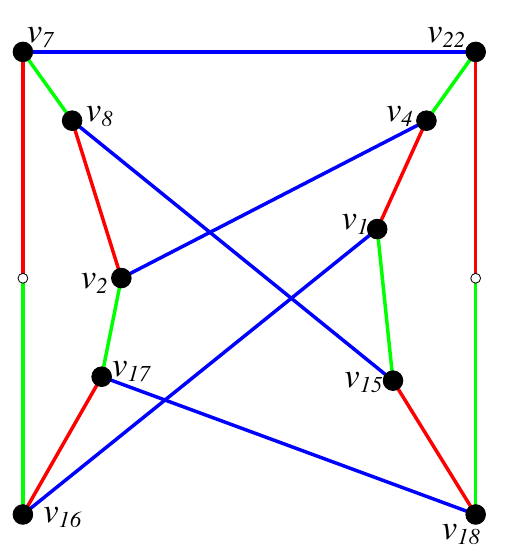}
   \label{fig_22_e}
   } \quad
 \subfigure[A subdivision of the Petersen graph embedded in Loupekine's first snark.]{
  \includegraphics[scale=0.6]{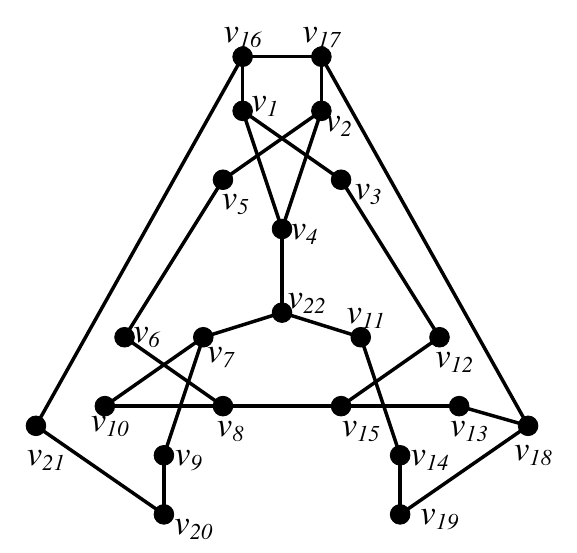}
   \label{fig_22_f}
   }
 \caption{Contraction of Loupekine's first snark to the Petersen graph.} 
\label{fig_22}
\end{figure}

\begin{figure}[htbp]
 \centering
 \subfigure[Loupekine's second snark $L_2$.]{
  \includegraphics[scale=0.6]{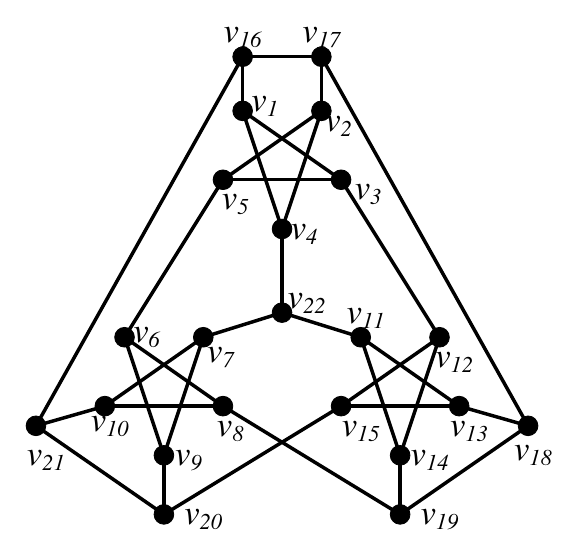}
   \label{fig_23_a}
   } \quad
 \subfigure[A configuration $T(L_2)$ of $L_2$.]{
  \includegraphics[scale=0.6]{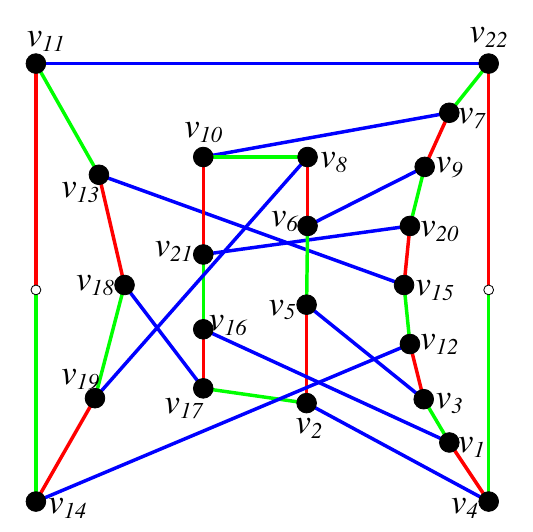}
   \label{fig_23_b}
   } \quad
	\subfigure[Delete $(a,a)$ edges in even $(a,b)$ cycles.]{
  \includegraphics[scale=0.6]{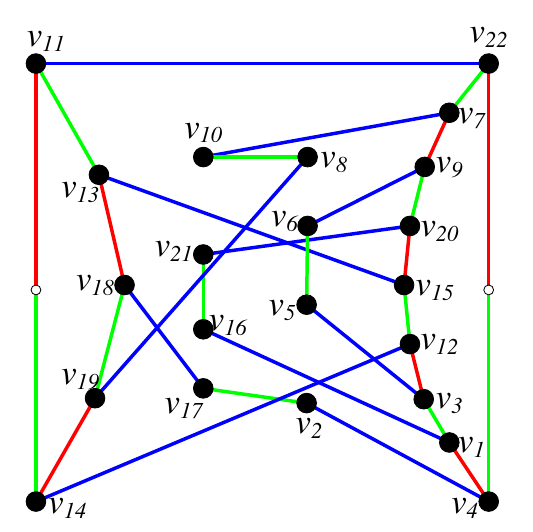}
   \label{fig_23_c}
   } \\
 \subfigure[Smooth vertices of degree 2.]{
  \includegraphics[scale=0.6]{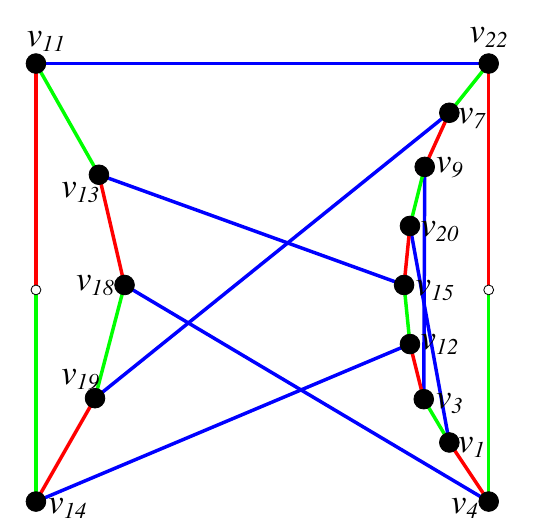}
   \label{fig_23_d}
   } \quad
	\subfigure[Delete internal chord and smooth vertices of degree 2.]{
  \includegraphics[scale=0.6]{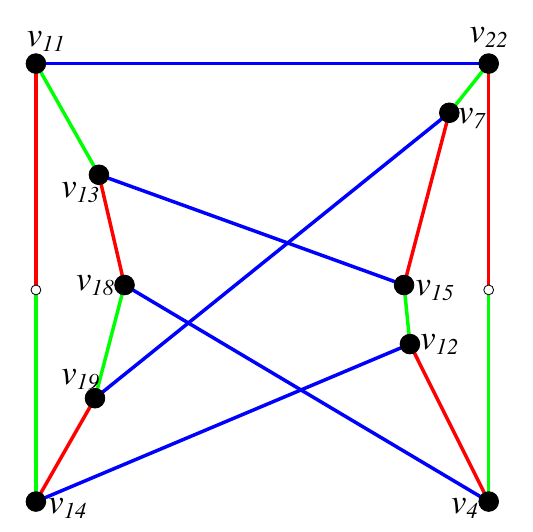}
   \label{fig_23_e}
   } \quad
 \subfigure[A subdivision of Petersen graph embedded in Loupekine's second snark.]{
  \includegraphics[scale=0.6]{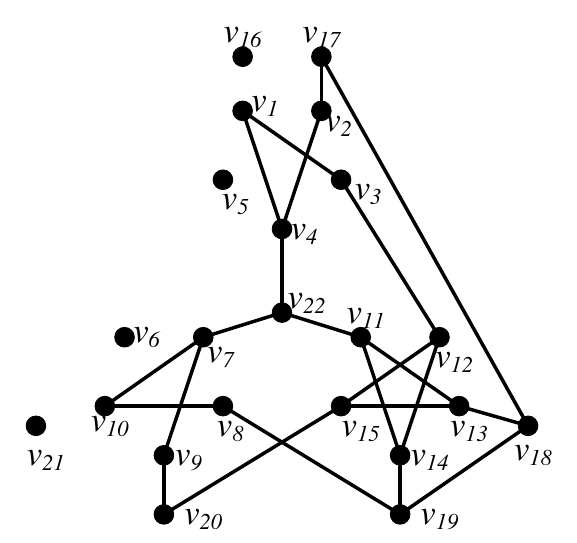}
   \label{fig_23_f}
   }
 \caption{Contraction of Loupekine's second snark to the Petersen graph.} 
\label{fig_23}
\end{figure}

\begin{figure}[htbp]
 \centering
 \subfigure[Double star snark.]{
  \includegraphics[scale=0.6]{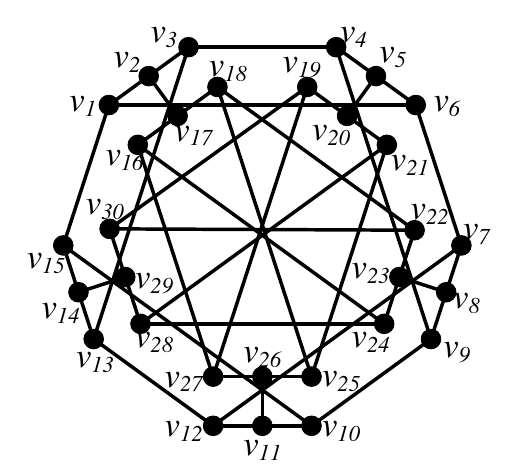}
   \label{fig_24_a}
   } \quad
 \subfigure[A configuration of double star snark.]{
  \includegraphics[scale=0.6]{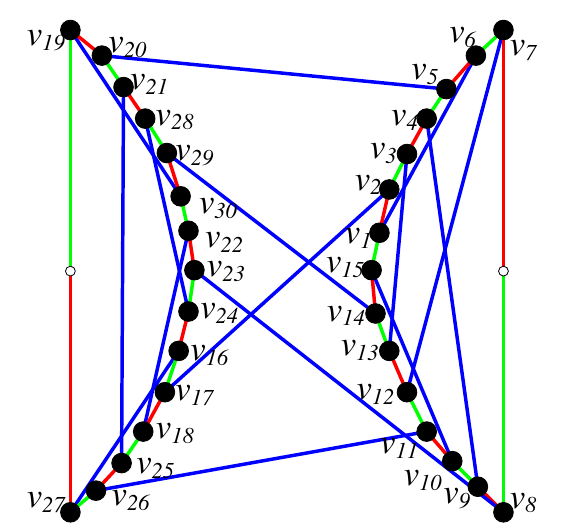}
   \label{fig_24_b}
   } \quad
	\subfigure[Delete internal chords.]{
  \includegraphics[scale=0.6]{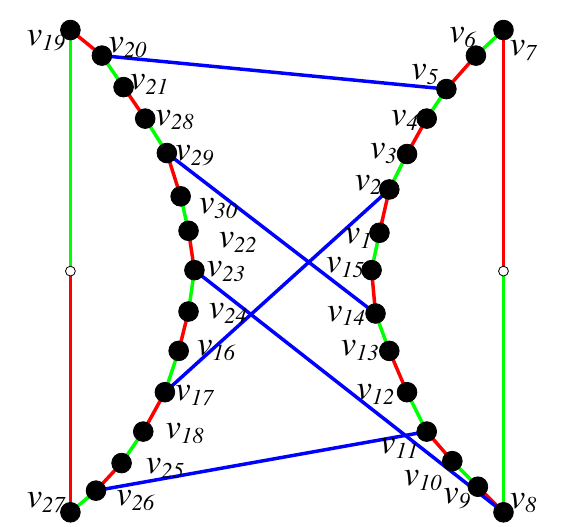}
   \label{fig_24_c}
   } \\
 \subfigure[Reduce to the Petersen graph.]{
  \includegraphics[scale=0.6]{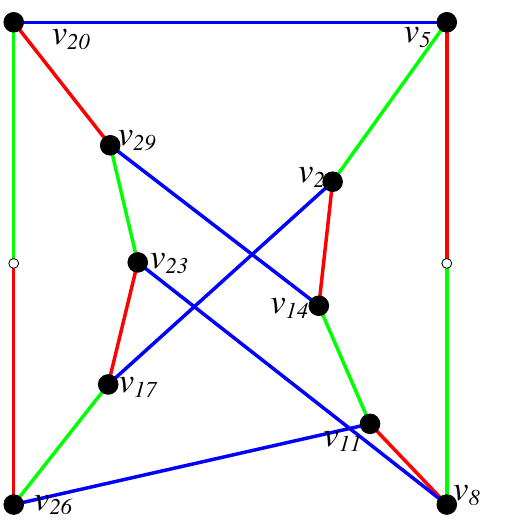}
   \label{fig_24_d}
   } \quad
	\subfigure[A subdivision of the Petersen graph embedded in double star snark.]{
  \includegraphics[scale=0.6]{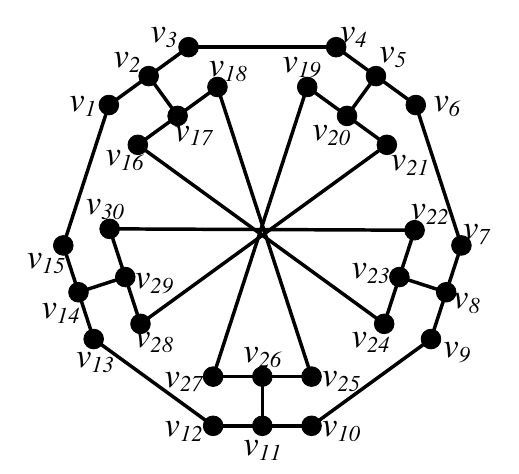}
   \label{fig_24_e}
   }
 
 \caption{Contraction of double star snark to the Petersen graph.}\label{fig_24}
\end{figure}
\end{document}